\newtheorem{theorem}{Theorem}[section]
\newtheorem{lemma}{Lemma}[section]
\newtheorem{corollary}{Corollary}[section]
\newtheorem{proposition}{Proposition}[section]
\newcommand{\view}[1][]{\if!#1!\mathcal{V}\else\mathcal{V}^{#1}\fi}
\newcommand{\cQ}{\mathcal{Q}}
\newcommand{\partition}[1]{\Pi_{#1}}
\newcommand{\viewPath}[3][]{\if!#1! P(#2,#3)\else P(#2,#1,#3)\fi}  
\newcommand{\ourColor}{\alpha}
\newcommand{\st}{\hspace{0.1cm}\bigl|\bigr.\hspace{0.1cm}}
\newcommand{\card}[1]{|#1|}
\newcommand{\levelInView}[2]{\textup{lev}_{#2}(#1)}   
\newcommand{\diam}{D}
\newcommand{\commonTime}[1]{\xi_{#1}}
\newcommand{\commonTimeMax}{\Xi}
\newcommand{\modulo}{\,\textup{mod}\,}
\newcommand{\nats}{\mathbb{N}}
\newcommand{\algorithmMain}{\textup{\texttt{Solve-LE-and-TOP}}}
\newcommand{\algorithmQuotient}{\textup{\texttt{ComputeQuotientGraph}}}
\newcommand{\algorithmView}{\textup{\texttt{ComputeView}}}
\newcommand{\algorithmTest}{\textup{\texttt{TestRepetition}}}
\newcommand{\returnTrue}{\texttt{true}}
\newcommand{\returnFalse}{\texttt{false}}
\newcommand{\returnFailure}{\texttt{{unsolvable}}}
\newcommand{\communicate}{\texttt{communicate}}
\newcommand{\problemLE}{\textup{LE}}
\newcommand{\problemTOP}{\textup{TOP}}
\newcommand{\cV}{{\cal V}}
\newcommand{\qed}{\hfill $\square$ \smallbreak}
\newenvironment{proof}[1][Proof]
{\par\noindent{\bf #1:} }{\hspace*{\fill}\nolinebreak{$\Box$}\bigskip\par}
\begin{document}

\title{Topology Recognition and Leader Election in Colored Networks}

\author{Dariusz Dereniowski\thanks{Faculty of Electronics, Telecommunications and Informatics, Gda{\'n}sk University of Technology, Narutowicza 11/12, 80-233 Gda\'{n}sk, Poland. Email: deren@eti.pg.gda.pl. Partially supported by National Science Centre grant DEC-2015/17/B/ST6/01887.}
\and
Andrzej Pelc\thanks{D\'epartement d'informatique, Universit\'e du Qu\'ebec en Outaouais, Gatineau, Qu\'ebec J8X 3X7, Canada. Email: pelc@uqo.ca. Partly supported by the NSERC discovery grant 8136 -- 2013 and by the Research Chair in Distributed Computing at the Universit\'e du Qu\'ebec en Outaouais.}
}

\date{}
\maketitle


\begin{abstract}
Topology recognition and leader election are fundamental tasks in distributed computing in  networks. The first of them requires each node to find a labeled isomorphic copy of the network, while the result of the second one consists in a single node adopting the label 1 (leader), with all
other nodes adopting the label 0 and learning a path to the leader. We consider both these problems in networks whose nodes 
are equipped with not necessarily distinct labels called {\em colors}, and ports at each node of degree $d$ are arbitrarily numbered $0,1,\dots, d-1$. Colored networks are generalizations both of labeled networks, in which nodes have distinct labels, and of anonymous 
networks, in which nodes do not have labels (all nodes have the same color). 

In colored networks, topology recognition and leader election are not always feasible. Hence we study two more general problems.
Consider a colored network and an input $I$ given to its nodes.
The aim of the problem $\problemTOP$, for this colored network and for $I$, is to solve topology recognition in this network, if this is possible under input $I$, and to have all nodes answer ``unsolvable'' otherwise.
Likewise, the aim of the problem $\problemLE$ is to solve leader election in this network, if this is possible under input $I$, and to have all nodes answer ``unsolvable'' otherwise. 

We show that nodes of a network can solve problems $\problemTOP$ and $\problemLE$, if they are given, as input $I$,
an upper bound $k$ on the number of nodes of a given color, called the {\em size} of this color.
On the other hand we show that, if the nodes are given an input that does not bound the size of any color, then the answer to $\problemTOP$ and $\problemLE$ must be ``unsolvable'', even for the class of rings.

Under the assumption that nodes are given an upper bound $k$ on the size of a given color, we study the time of solving problems $\problemTOP$ and $\problemLE$ in the $\cal{LOCAL}$ model in which, during each round, each node can exchange arbitrary messages with all its neighbors and perform arbitrary local computations.
We give an algorithm to solve each of these problems in arbitrary networks in time $O(kD+D\log(n/D))$, where $D$ is the diameter
of the network and $n$ is its size. We also show that this time is optimal, by exhibiting
classes of networks in which every algorithm solving problems $\problemTOP$ or $\problemLE$ must use time
$\Omega(kD+D\log(n/D))$.
\end{abstract}

\textbf{Keywords:} topology recognition, leader election, colored network, local model

\section{Introduction} \label{sec:intro}

\subsection{The model and the problem} \label{subsec:model}

Topology recognition and leader election are fundamental tasks in distributed computing in  networks. The goal of topology recognition is for each node of the network to acquire a faithful map of it (an isomorphic
copy of the underlying network with all nodes having distinct identifiers), with the position of the node marked in the map. If nodes can solve this problem, any other 
distributed task, such as leader election \cite{HS,P}, minimum weight spanning tree construction \cite{A}, 
renaming \cite{ABDKPR}, etc. can be performed by them using
only local computations. Thus topology recognition converts all distributed problems to centralized ones, 
in the sense that nodes can solve any distributed problem simulating a central monitor.

Leader election, first stated in \cite{LL}, is likewise of fundamental importance.
Each node of the network has a Boolean variable initialized to 0 and, after the election, exactly one node,
called the {\em leader}, should change this value to 1. All other nodes should know which one becomes the leader by discovering a path to it.
Notice that the above two problems are equivalent: having a map of the network with distinct node labels, nodes can elect the node with the smallest
label as the leader, and conversely, knowing a leader, nodes can construct a map of the network using the leader as a stationary token, cf. \cite{CDK}.

It should be noted that formulations of the leader election problem vary across the literature (cf. \cite{Ly}). In a weak formulation, every node should only know if it is the leader or not. (In \cite{MP}, this task was called selection, by contrast to election). In a strong formulation, every node should moreover get to know who is the leader. We adopt the latter formulation of the leader election problem. When nodes have distinct identities, knowing who is the leader means outputting its identity. In our scenario distinct identities need not exist, hence knowing who is the leader means that every node outputs a path to the leader, coded as a sequence of ports. This formulation of leader election was used, e.g., in \cite{GMP} for anonymous networks.

A network is modeled as a simple undirected connected graph. As commonly done in the literature, cf., e.g. \cite{YK3} {or $KT_0$ model in \cite{AGPV}}, we assume that
ports at a node of degree $d$ have arbitrary fixed labelings $0,\dots,$ $d-1$.
We do not assume any coherence between port labelings at various nodes.
As for nodes, we assume that they are equipped with not necessarily distinct labels called {\em colors}.  In applications, colors may be types of the devices interconnected by the network, 
such as workstations, servers, laptops, or mobile phones.
Networks with colored nodes are generalizations both of labeled networks in which nodes have distinct labels, and of anonymous 
networks, in which nodes do not have labels (all nodes have the same color). 
Nodes communicate by exchanging arbitrary messages along links. A node sending a message
through a given port appends the port number to the message, and a node receiving a message
through a port is aware of the port number by which the message is received.

If nodes have distinct identities, both topology recognition and leader election are easily accomplished in any network. By contrast, it is well known that, in the absence of distinct node labels, these tasks are often impossible, if no additional information about the network is provided to nodes. In fact, even the less demanding task of reconstructing an unlabeled isomorphic copy of the
network is sometimes impossible. 
For example, in an anonymous ring whose each edge has port numbers 0 and 1 at its
endpoints, not only topology recognition and leader election cannot be achieved but even 
the size of the ring cannot be learned by nodes. Providing the size of the network as input is not a remedy
either: the authors of \cite{YK3} give examples of two (anonymous) non-isomorphic graphs of size 6 whose
nodes cannot decide in which of these two graphs they are.

Due to these impossibilities, we consider two problems more general than topology recognition
and leader election, respectively.
Consider a colored network and an input $I$ given to its nodes.
The aim of the problem $\problemTOP$, for this colored network and for $I$, is to solve topology recognition in this network, if this is possible under input $I$, and to have all nodes answer ``unsolvable'' otherwise.
Likewise, the aim of the problem $\problemLE$ is to solve leader election in this network, if this is possible under input $I$, and to have all nodes answer ``unsolvable'' otherwise. 

Our goal is to find out what type of input has to be given to the nodes of a colored network in order to enable them to solve problems $\problemTOP$ and $\problemLE$, and what is the minimal time in which they can solve these problems, if this input is provided.
To investigate time, we use the extensively studied $\cal{LOCAL}$ model \cite{Pe}.
In this model, communication proceeds in synchronous rounds and all nodes start simultaneously.
In each round each node can exchange arbitrary messages with all its neighbors and perform arbitrary local computations. 
The time of completing a task is the number of rounds it takes.

\subsection{Our results} \label{subsec:our_results}

We first show that nodes of a network can solve problems $\problemTOP$ and $\problemLE$, if they are given
an upper bound $k$ on the number of nodes of a given color, called the {\em size} of this color.
This means that, if such an upper bound is known to all nodes (even if they do not know any upper bound on the total number of nodes
or on the number of colors), then they can correctly decide if leader election and topology recognition
are feasible in the given network, and if so, they can perform these tasks. 
On the other hand, if the nodes are given an input that does not bound the size of any color, then the answer to $\problemTOP$ and $\problemLE$ must be ``unsolvable'', even for the class of rings.

Hence, providing all nodes with an upper bound on the size of some color is the weakest assumption under which problems $\problemTOP$ and $\problemLE$ can be meaningfully solved. 
This justifies the use of this assumption in our algorithms.

Next, assuming that all nodes have an upper bound $k$ on the size of a given color,
we study the time of solving problems $\problemTOP$ and $\problemLE$ in the $\cal{LOCAL}$ model. We give an 
algorithm to solve each of these problems in arbitrary networks in time $O(kD+D\log(n/D))$, where $D$ is the diameter
of the network and $n$ is its size. We also show that this time is optimal, by exhibiting
classes of networks in which every algorithm solving problems $\problemTOP$ or $\problemLE$ must use time
$\Omega(kD+D\log(n/D))$.

\subsection{Related work} \label{subsec:related_work}
Early studies of leader election in networks mostly concerned the scenario where all nodes have distinct labels.
This task was first studied for rings.
A synchronous algorithm, based on comparisons of labels, and using
$O(n \log n)$ messages was given in \cite{HS}. It was proved in \cite{FL} that
this complexity is optimal for comparison-based algorithms. On the other hand, the authors showed
an algorithm using a linear number of messages but requiring very large running time.
An asynchronous algorithm using $O(n \log n)$ messages was given, e.g., in \cite{P} and
the optimality of this message complexity was shown in \cite{B}. Deterministic leader election in radio networks has been studied, e.g., 
in \cite{JKZ,KP,NO} and randomized leader election, e.g., in \cite{Wil}. In \cite{HKMMJ} the leader election problem is
approached in a model based on mobile agents for networks with labeled nodes.

Many authors \cite{An,ASW,AtSn,BV,DKMP,Kr,KKV,Saka,YK,YK3} studied leader election
and other computational problems
in anonymous networks. In particular, \cite{BSVCGS,YK3} characterize message passing networks in which
leader election can be achieved when nodes are anonymous. 
The authors assume that nodes know an upper bound on the size of the network. In \cite{YK2} the authors study
the problem of leader election in general networks, under the assumption that labels are
not unique. They characterize networks in which this can be done and give an algorithm
which performs election when it is feasible. They assume that the number of nodes of the
network is known to all nodes. In
 \cite{FKKLS}  the authors
study feasibility and message complexity of sorting and leader election in rings with
nonunique labels, while in \cite{DoPe} the authors provide algorithms for the
generalized leader election problem in rings with arbitrary labels,
unknown (and arbitrary) size of the ring, and for both
synchronous and asynchronous communication. 
Characterizations of feasible instances for leader election and naming problems have been provided in~\cite{C,CMM,CM}.
Memory needed for leader election in unlabeled networks has been studied in \cite{FP}. 
In \cite{FP1}, the authors investigated the time of leader election in anonymous networks
by characterizing this time in terms of the size and diameter of the network, and of an additional
parameter, called level of symmetry, which measures how deeply nodes have to inspect the network to notice differences in their views of it.
In \cite{DP1}, the authors studied feasibility of leader election among anonymous agents that
navigate in a network in an asynchronous way.

Feasibility of topology recognition for anonymous networks with adversarial port labelings was studied in~\cite{YK3},
under the assumption that nodes know an upper bound on the size of the network.
The problem of efficiency of map construction by a mobile agent, equipped with a token and exploring an anonymous network,  has
been studied in \cite{CDK}. In \cite{DP}, the authors investigated the minimum size of advice
that has to be given to a mobile agent, in order to enable it to reconstruct  the topology of an anonymous network or to construct its spanning tree.
In \cite{FPR},
tradeoffs between time of topology recognition and the size of advice given to nodes were studied
in the $\cal{LOCAL}$ communication model.

\section{Preliminaries} \label{sec:feasibility}

In this section we introduce some basic terminology and provide preliminary  results known from the literature. Let $G$ be a simple connected undirected network with the set of nodes $V$, and let $c$ be a positive integer.
Consider any surjective function $f: V \longrightarrow \{1,\dots , c\}$.
The couple $(G,f)$ is called a {\em colored network}, the function $f$ is called a
{\em coloring} of this network, and $f(v)$ is called the {\em color} of node $v$.

We will use the following notion from \cite{YK3}. Let $G$ be a network and $v$ a node of $G$.  We first define, for any $l \geq 0$,  the {\em truncated view}
$\cV^l(v)$ at depth $l$, by induction on $l$. $\cV^0(v)$ is a tree consisting of a single node $x_0$. 
If $\cV^l(u)$ is defined for any node $u$ in the network, then $\cV^{l+1}(v)$ is the port-labeled tree
rooted at $x_0$ and defined as follows.
For every node $v_i$, $i=1,\dots ,k$, adjacent to $v$, 
there is a child $x_i$ of $x_0$ in $\cV^{l+1}(v)$ such that the port number at $v$ corresponding to edge $\{v,v_i\} $ is the same as the port number 
at $x_0$ corresponding to edge $\{x_0,x_i\}$,
and the port number at $v_i$ corresponding to edge $\{v,v_i\} $ is the same as the port number at $x_i$ corresponding to edge $\{x_0,x_i\}$.  We say that the node $x_i$ {\em represents} node $v_i$.
Now node $x_i$, for $i=1,\dots ,k$ becomes 
the root of the truncated view $\cV^l(v_i)$.   
 The {\em view} of $v$ is the infinite rooted tree $\cV(v)$ with labeled ports, such that $\cV^l(v)$ is its truncation to depth $l$, for each $l\geq 0$.

We will also use a notion similar to that of the view but corresponding to colored networks (cf. \cite{Norris}). Consider a colored network $(G,f)$. Let $v$ be any node of $G$. 
Let $f^*: \cV(v) \longrightarrow \{1,\dots , c\} $ be the function defined as follows:
$f^*(x)=f(v)$, where  $x$ is a node of $\cV(v)$ representing node $v$.
The couple $(\cV(v), f^*)$ is called the {\em colored view} of node $v$.
Thus, the colored view of a node additionally marks colors of nodes represented in it.
The couple $(\cV^{l}(v), f_l)$, where $f_l$ is the truncation of $f^*$ to $\cV^l(v)$, is called a 
{\em truncated  colored view} of node $v$.

For the (truncated) views of a node $v$ we will often omit the node $v$ in the notation,
thus writing $\cV$ instead of $\cV (v)$ and $\cV^l$ instead of $\cV^l(v)$, if the node $v$, called
the {\em root} of the view, is clear from the context. The same convention applies to 
colored (truncated) views. The \emph{level} $l$ of a view $\cV$, denoted $\levelInView{\cV}{l}$, is the set of all its nodes at distance $l$ (in $\cV$) from the root of the view. For nodes $u$ and $v$ in the truncated view $\cV ^l$, we denote by $P(\cV ^l,u,v)$ the unique path in $\cV ^l$ from $u$ to $v$,
defined as a sequence of nodes in this truncated view. We denote by $P(\cV ^l,v)$ the unique path from the root of the view to $v$. For such a path $P$, we denote by $|P|$ the length of this path,
defined as the number of edges in it. For a truncated view $\cV ^l$ and a node $z$ in this view,
we denote by $\cV ^l[z]$ the subtree of  $\cV ^l$ rooted at $z$.

The following proposition was proved in \cite{H}.
\begin{proposition}\label{trunc}
For a $n$-node network of diameter $D$,
$\cV(u)=\cV(v)$, if and only if $\cV^{h}(u)=\cV^{h}(v)$,
for some $h \in \Theta(D \log (n/D))$.
\qed\end{proposition}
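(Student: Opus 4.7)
The forward direction is immediate from the definition: $\cV^h(u)$ is by construction the truncation of $\cV(u)$ to depth $h$, so $\cV(u)=\cV(v)$ forces $\cV^h(u)=\cV^h(v)$ for every $h$. The real content is the converse, so my plan is to set up a partition refinement argument on node truncated views and then bound the depth at which it stabilizes.

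For each $l\geq 0$, define the equivalence relation $\equiv_l$ on $V$ by $u\equiv_l v$ iff $\cV^l(u)=\cV^l(v)$, and let $\Pi_l$ be the induced partition. Because $\cV^l$ is the truncation of $\cV^{l+1}$, $\Pi_{l+1}$ refines $\Pi_l$. The first lemma I would prove is a stabilization lemma: if $\Pi_{l+1}=\Pi_l$ then $\Pi_L=\Pi_l$ for every $L\geq l$, and this common partition is exactly the view-equivalence relation. The argument is by induction on $L$: the tree $\cV^{L+1}(u)$ is determined by the color of $u$ together with, for each incident port $p$, the two endpoint port numbers and the $\equiv_L$-class of the neighbor reached through $p$; assuming inductively $\equiv_L=\equiv_l$, this data is already encoded in $\cV^{L}(u)$, so $\Pi_{L+1}=\Pi_L$. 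Passing to the limit gives that the stable partition equals view-equivalence, so it suffices to bound the smallest stabilization depth $h^*$.

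The hard part, and the step I would spend the most effort on, is the quantitative bound $h^*\in O(D\log(n/D))$. My plan is a telescoping argument on blocks of length $D$. First I would argue that if $\Pi_{l+D}=\Pi_l$ then stabilization has already occurred at $l$: after $D$ rounds each node has probed every other node along some path of length $\leq D$, so any further refinement would have been forced within this window. Hence in each block of $D$ consecutive depths where $h^*$ has not yet been reached, $\Pi_{(j+1)D}$ strictly refines $\Pi_{jD}$. To extract the $\log(n/D)$ factor rather than only the trivial $n$ factor, I would show a ``substantial refinement'' estimate: in each non-stable block of length $D$ the number of classes grows by at least a constant factor (roughly doubling, once the partition has $\Omega(n/D)$ classes). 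Summing over blocks, at most $O(\log(n/D))$ blocks of length $D$ can occur before the class count saturates at its final value, yielding $h^*\in O(D\log(n/D))$.

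The main obstacle I anticipate is precisely this doubling-type estimate: showing that in a block of length $D$ a non-trivial refinement cannot be tiny. I would attack it by a pigeonhole/quotient-graph argument: consider the quotient of $G$ by the current partition $\Pi_l$; if some class fails to split substantially over $D$ rounds, then there is enough local symmetry across the quotient to fit many further rounds without new distinctions, which one can show is incompatible with the graph having only $n$ nodes and diameter $D$. After the upper bound is in place, the $\Omega(D\log(n/D))$ lower bound needed for the $\Theta$ claim would be handled separately by exhibiting graph families (e.g. suitably chained symmetric gadgets of diameter $D$) on which truncated views of non-view-equivalent nodes agree up to depth $\Omega(D\log(n/D))$.
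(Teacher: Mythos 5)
The paper does not actually prove this proposition: it is imported verbatim from Hendrickx [H], so the only fair comparison is between your plan and the argument in that reference. Your easy direction and your stabilization lemma are correct (the latter is exactly Norris's result, restated as Proposition~\ref{stop} in the paper), and the overall shape --- partition refinement plus a growth rate estimate on the number of classes of $\Pi_t$ --- is indeed the shape of Hendrickx's proof. Two remarks on the preliminaries: your claim that $\Pi_{l+D}=\Pi_l$ forces stabilization is true, but for the trivial reason that $\Pi_{l+1}$ sits between $\Pi_l$ and $\Pi_{l+D}$ in the refinement order, so $\Pi_{l+1}=\Pi_l$ and Norris applies; the ``each node has probed every other node within $D$ rounds'' justification is not the right mechanism. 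Also, the $\Omega(D\log(n/D))$ tightness you propose to prove at the end is not part of this statement; it is the separate Proposition~\ref{trunc-lower}, which the paper takes from [DKP].

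The genuine gap is the quantitative core, and it is twofold. First, the ``doubling per non-stable block of length $D$'' estimate is precisely the hard content of the theorem, and you leave it as an acknowledged obstacle with only a heuristic pigeonhole sketch; nothing in the proposal establishes it. Second, and more seriously, even if that lemma were granted, your bookkeeping does not yield the claimed bound. The final partition $\Pi$ can have up to $n$ classes (e.g.\ when all views are distinct, which is exactly the regime this paper cares about), so starting from one class and doubling once per block of length $D$ requires up to $\log_2 n$ blocks, i.e.\ depth $O(D\log n)$ --- strictly worse than $O(D\log(n/D))$ when $D$ is close to $n$ (say $D=n/\log n$, where $D\log(n/D)=o(n)$ but $D\log n=\Theta(n)$). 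Your phrase ``at most $O(\log(n/D))$ blocks can occur before the class count saturates'' is therefore unjustified: after $\log_2(n/D)$ doublings you only have about $n/D$ classes, not $n$. To actually obtain $O(D\log(n/D))$ one needs the window required for a doubling to shrink as the number of classes $k$ grows, roughly to $\min(D,\,O(n/k))$ steps; summing $\sum_j \min(D,\,O(n/2^j))$ then gives $O(D\log(n/D))+O(D)$. That sharpened, size-dependent refinement lemma is the crux of [H], and it is absent from your plan.
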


The following proposition that follows from \cite{DKP} shows that the truncation level $h$ from Proposition \ref{trunc}
is the smallest possible, up to constant factors.

\begin{proposition}\label{trunc-lower}
For any integers $D\leq n$, there exists a network $G$ of size $\Theta(n)$ and diameter $\Theta(D)$, with nodes $u$ and $v$, both with unique views,
such that $\cV(u)\neq \cV(v)$ but $\cV^{h'}(u)=\cV^{h'}(v)$,
for some $h' \in \Theta(D \log (n/D))$.
Moreover, there exists a network $G'$ having the same size and diameter as that of $G$, with a node $u'$, such that $\view[h'](u')=\view[h'](u)$.
 \qed\end{proposition}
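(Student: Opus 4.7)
The plan is to give an explicit construction matching the upper bound of Proposition \ref{trunc}. By that proposition a truncation depth in $\Theta(D\log(n/D))$ always suffices to decide equality of views, so the task is to show that no smaller depth works in general. I would start from a \emph{base indistinguishability gadget} from \cite{DKP}: a graph on $\Theta(D)$ nodes containing two nodes $u_0,v_0$ whose truncated views coincide up to depth $\Omega(D)$ while their infinite views differ. A typical realization uses a path of length $\Theta(D)$ with carefully chosen port labels at the two candidate endpoints; a deliberately placed local asymmetry only becomes visible after traversing the whole path.

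Next, to obtain the factor $\log(n/D)$, I would amplify the base gadget by a doubling/recursive construction. Inductively, assume I have a graph $G_i$ of size $\Theta(2^i D)$ and diameter $\Theta(D)$ containing two distinguished nodes $u_i,v_i$ with $\cV^{ciD}(u_i)=\cV^{ciD}(v_i)$ but $\cV(u_i)\neq \cV(v_i)$, for some constant $c>0$. To build $G_{i+1}$, I would glue two copies of $G_i$ along a matched interface (obtained from the base gadget again) in such a way that the asymmetry witnessing $\cV(u_i)\neq \cV(v_i)$ is pushed $\Theta(D)$ levels deeper in the view of the new distinguished nodes $u_{i+1},v_{i+1}$, the size at most doubles, and the diameter grows by only $O(D)$. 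After $i=\Theta(\log(n/D))$ iterations this yields the required $G$ with $h'=\Theta(D\log(n/D))$. To guarantee uniqueness of $\cV(u)$ and $\cV(v)$ in $G$, I would decorate the remaining nodes with short asymmetric ``signature'' branches chosen so that these signatures are either invisible from $u$ and $v$ up to depth $h'$, or appear identically in both $\cV^{h'}(u)$ and $\cV^{h'}(v)$; this breaks coincidences at other nodes without disturbing the carefully balanced coincidence at $u,v$.

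For the second part of the statement, I use the fact that $\view[h'](u')$ depends only on the ball of radius $h'$ around $u'$ together with its port labeling. I would take the ball $B_G(u,h'{+}1)$, detach it from $G$, and close the dangling edges by attaching a small auxiliary graph (for instance, an extra copy of the base gadget) that restores connectivity without touching the interior of the ball. This produces $G'$ of size $\Theta(n)$ and diameter $\Theta(D)$ in which the node $u'$ corresponding to $u$ satisfies $\view[h'](u')=\view[h'](u)$ by construction.

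The main obstacle is the recursive gluing step: at each level one must simultaneously (i) add exactly $\Theta(D)$ to the depth at which the views of $u_{i+1}$ and $v_{i+1}$ first differ, (ii) keep the diameter at $\Theta(D)$ despite chaining $\Theta(n/D)$ copies of the gadget, and (iii) avoid creating unintended symmetric pairs that would violate the uniqueness of $\cV(u)$ and $\cV(v)$. The port labeling on the glue edges is the key design degree of freedom that must be tuned to reconcile these three constraints; once it is fixed, checking the coincidence depth and the diameter bound reduces to a routine induction on~$i$.
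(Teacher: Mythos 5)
First, note that the paper itself does not prove this proposition: it is imported wholesale from \cite{DKP} (``The following proposition that follows from \cite{DKP}\dots''), and the construction you are sketching is essentially the entire content of that cited paper. Your outline has the right general shape --- $\Theta(\log(n/D))$ amplification stages, each pushing the first discrepancy between the two views $\Theta(D)$ levels deeper --- but it leaves unresolved precisely the points where the construction is hard. Concretely: (i) you say each gluing step grows the diameter by $O(D)$; after $\Theta(\log(n/D))$ steps this yields diameter $O(D\log(n/D))$, not $\Theta(D)$, and you only list keeping the diameter at $\Theta(D)$ as an ``obstacle'' whose resolution is deferred to ``tuning the port labeling'' --- but reconciling diameter $\Theta(D)$ with indistinguishability depth $\Theta(D\log(n/D))$ is the crux of the whole result, not a detail; (ii) the claim that gluing two copies of $G_i$ pushes the first difference $\Theta(D)$ deeper (rather than $O(1)$ deeper) is asserted, never argued, and it is exactly the technical heart of \cite{DKP}; (iii) the ``signature branches'' used to make $\cV(u)$ and $\cV(v)$ unique are required to appear identically in $\cV^{h'}(u)$ and $\cV^{h'}(v)$ while breaking all other coincidences, and no mechanism is offered for why such decorations exist without also separating the two truncated views you are trying to keep equal.

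Second, your argument for the ``moreover'' part fails outright. Since $h'\in\Theta(D\log(n/D))$ exceeds the eccentricity of $u$ throughout the relevant range of parameters (the diameter is only $\Theta(D)$), the ball $B_G(u,h'+1)$ is all of $G$; detaching it leaves no dangling edges to close, so your construction returns $G'=G$ and $u'=u$. That makes the statement vacuous, whereas the way it is used in Proposition \ref{second-lb} requires $G'$ to be non-isomorphic to $G$ while still containing a node $u'$ with $\view[h'](u')=\view[h'](u)$: an algorithm that stops by time $h'$ then cannot tell the two topologies apart. Producing such a $G'$ requires a global modification that changes the isomorphism type while preserving the universal cover to depth $h'$; it cannot be obtained by local surgery on a ball whose radius already exceeds the diameter, and it is again part of what \cite{DKP} actually constructs.
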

 
Propositions \ref{trunc} and \ref{trunc-lower} remain valid, when views are replaced by colored views.

Define the following equivalence relations on the set of nodes of a colored network $(G,f)$.
{Let} $u\sim v$ if and only if $(\cV(u),f^*)=(\cV(v), f^*)$, and {let also} $u\sim_t v$ if and only if $(\cV^t(u), f_t)=(\cV^t(v), f_t)$.
Let $\Pi$ be the partition of all nodes into equivalence classes of $\sim$, and $\Pi_t$ the corresponding partition for  $\sim_t$.
It follows from \cite{YK3} that all equivalence classes in $\Pi$ are of equal size $\sigma$. In view of Proposition \ref{trunc}
this is also the case for $\Pi_t$, for some $t\in \Theta(D \log (n/D))$. On the other hand, for smaller $t$, equivalence classes in $\Pi_t$ 
may be of different sizes.  
Every equivalence class in  $\Pi_t$ is a union of some equivalence classes in  $\Pi_{t'}$, for $t<t'$.  
The following result was proved in \cite{Norris}. It says that if the sequence of partitions $\Pi_t$ stops changing at some point, it will never change again. 

\begin{proposition}\label{stop}
If $\Pi_t=\Pi_{t+1}$, then $\Pi_t=\Pi$.
\qed\end{proposition}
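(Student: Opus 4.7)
The plan is to prove by induction that once the partition sequence stabilizes at step $t$, it never changes again, and then invoke Proposition \ref{trunc} (in its colored version) to conclude that the stable value equals $\Pi$.

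First I would establish the one-step inductive claim: if $\Pi_s = \Pi_{s+1}$, then $\Pi_{s+1} = \Pi_{s+2}$. The partitions are always monotonically refining, that is, $\sim_{s+2}$ implies $\sim_{s+1}$, simply because a truncated colored view at depth $s+1$ is a truncation of one at depth $s+2$. So the content of the step is the reverse implication: assuming $\sim_s\,=\,\sim_{s+1}$, I want to show that $u\sim_{s+1} v$ forces $u\sim_{s+2} v$.

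The key structural observation is that $\cV^{l+1}(v)$ is determined by the color $f(v)$ together with the multiset of triples (port number at $v$, port number at neighbor, truncated colored view $\cV^l$ of that neighbor). Hence if $u\sim_{s+1} v$, the truncated colored views $\cV^{s+1}(u)$ and $\cV^{s+1}(v)$ agree, and in particular they induce a matching of the neighbors of $u$ with the neighbors of $v$ that preserves colors, both port numbers at the endpoints, and the truncated colored view at depth $s$ of each matched neighbor. So every matched pair $u_i,v_i$ satisfies $u_i\sim_s v_i$. By the hypothesis $\Pi_s=\Pi_{s+1}$, the same pairs satisfy $u_i\sim_{s+1} v_i$, i.e.\ $\cV^{s+1}(u_i)=\cV^{s+1}(v_i)$. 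Reassembling $\cV^{s+2}(u)$ and $\cV^{s+2}(v)$ from the common color $f(u)=f(v)$, the matched port numbers, and these equal depth-$(s+1)$ neighbor views yields $\cV^{s+2}(u)=\cV^{s+2}(v)$, i.e.\ $u\sim_{s+2} v$.

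Iterating the inductive step gives $\Pi_t=\Pi_{t+1}=\Pi_{t+2}=\cdots$, so $\Pi_t=\Pi_s$ for every $s\geq t$. By (the colored version of) Proposition \ref{trunc}, for all sufficiently large $s$ we have $\cV^s(u)=\cV^s(v)$ iff $\cV(u)=\cV(v)$, which, after incorporating the coloring in the obvious way, gives $\Pi_s=\Pi$ for large $s$. Combining, $\Pi_t=\Pi$, as required.

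The only subtle point, and the step I would be most careful about, is the reconstruction argument that $\cV^{l+1}(v)$ is determined by $f(v)$ and the labeled multiset of neighbors' depth-$l$ colored views: one must make sure that the matching of neighbors induced by the equality $\cV^{s+1}(u)=\cV^{s+1}(v)$ carries both port endpoints and the full depth-$s$ subtree, so that equality of $\sim_s$-classes really yields equality of the corresponding depth-$s$ truncated colored views, not just some weaker equivalence. Once this is spelled out carefully, the induction and the appeal to Proposition \ref{trunc} are straightforward.
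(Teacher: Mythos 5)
Your proof is correct. The paper does not prove this proposition itself --- it is quoted from Norris's paper \cite{Norris} --- and your argument (the partitions $\Pi_s$ refine monotonically, the map from $\Pi_s$ to $\Pi_{s+1}$ is determined by the port-labeled, colored neighborhood data, so a fixed point of one refinement step is a fixed point forever, and the stable value must be $\Pi$ by Proposition~\ref{trunc}) is precisely the standard proof of that result, with the one genuinely delicate point --- that equality of $\cV^{s+1}(u)$ and $\cV^{s+1}(v)$ induces a port-preserving matching of neighbors under which matched neighbors have equal depth-$s$ colored views --- correctly identified and handled.
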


The following proposition, easily proved by induction on $T$, implies that if $\view[T](u)=\view[T](v)$ for some nodes $u$ and $v$, and all nodes are given the same information about the network, then any algorithm solving $\problemTOP$ or $\problemLE$ in time at most $T$ must give the same output, when executed by $u$ and by $v$.
\begin{proposition} \label{prop:algoView}
Let $u$ be a node in a colored network $G$ and $u'$ a node in a colored network $G'$.
Suppose that initially all nodes of $G$ and $G'$ have the same input.
Let $T$ be a positive integer and assume that $\view[T](u)=\view[T](u')$.
For any $t\leq T$, let $M$ be the message received by $u$ through port $p$ in round $t$.
Then, message $M$ is received by $u'$ through port $p$ in round $t$.
\qed
\end{proposition}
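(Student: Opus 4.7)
The plan is to prove the proposition by induction on $T$, leaning on the standard recursive structure of views. The central auxiliary observation I would invoke throughout is that $\view[T](u)=\view[T](u')$ with $T\geq 1$ forces, for every port $p$ incident to the root, the neighbors $u_p$ of $u$ and $u'_p$ of $u'$ through port $p$ to satisfy $\view[T-1](u_p)=\view[T-1](u'_p)$, while the port numbers at these neighbors corresponding to the edges back to $u$ and $u'$ coincide. This follows immediately from how $\view[T]$ is built: the children of the root encode the incident ports together with the depth-$(T-1)$ subviews rooted at the neighbors.

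For the base case $T=1$, the initial state of each node is a deterministic function of its input (common across both networks) and of its local information, which is already captured at the root of the view. Hence $\view[1](u)=\view[1](u')$ forces $u$ and $u'$ to start round $1$ in the same state and therefore to send the same message through each port. Applying the same reasoning to every neighbor pair $(u_p,u'_p)$, which agree on $\view[0]$ and receive the common input, shows that these neighbors send identical messages back through the ports leading to $u$ and $u'$ in round $1$; hence the message $u$ receives through port $p$ in round $1$ coincides with the one $u'$ receives through port $p$ in round $1$.

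For the inductive step, assume the claim for $T-1$ and consider nodes with $\view[T](u)=\view[T](u')$. Each corresponding neighbor pair $(u_p,u'_p)$ satisfies $\view[T-1](u_p)=\view[T-1](u'_p)$, so by the inductive hypothesis every message exchanged by $u_p$ through any of its ports in rounds $1,\ldots,T-1$ matches the analogous message at $u'_p$. Consequently $u_p$ and $u'_p$ enter round $T$ in the same local state and send the same message through the port leading back to $u$ and $u'$, which is exactly what is needed to extend the conclusion to round $T$. I expect the only bookkeeping subtlety to be the careful separation between ``state at the beginning of round $t$'', ``message sent in round $t$'', and ``message received in round $t$'', together with the observation that the common-input hypothesis is precisely what closes the induction when it is applied at each neighbor; there is no deeper obstacle.
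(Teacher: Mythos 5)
Your overall strategy -- induction on $T$ via the observation that $\view[T](u)=\view[T](u')$ forces $\view[T-1](u_p)=\view[T-1](u'_p)$ for corresponding neighbors -- is exactly the argument the paper has in mind (the paper offers no written proof beyond the phrase ``easily proved by induction on $T$''), and your inductive step is sound. However, your base case contains a genuine gap. You claim that the neighbor pairs $(u_p,u'_p)$ ``agree on $\view[0]$ and receive the common input'' and therefore send identical messages in round $1$. This does not follow: the initial state of a node in the $\cal{LOCAL}$ model includes its degree and its incident port numbers (a node must know through which ports it sends), and neither $\view[0](u_p)$ nor the hypothesis $\view[1](u)=\view[1](u')$ determines $\deg(u_p)$. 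Concretely, take $G$ a three-node path $u$--$v$--$w$ and $G'$ a single edge $u'$--$v'$, all nodes the same color and port $0$ at $u$, $u'$, $v$, $v'$ on the edges incident to $u$ and $u'$; then $\view[1](u)=\view[1](u')$, yet the algorithm ``send your degree to all neighbors in round $1$'' delivers $2$ to $u$ and $1$ to $u'$ through port $0$. So the statement is actually false for $T=1$ as literally written, and no base-case argument can repair that; the defect is an off-by-one in the proposition itself rather than in your plan.

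The honest way to write the proof is to make the dependence on depth explicit: show by induction on $t$ that the state of a node $w$ after $t$ rounds is a function of the common input and of $(\view[t+1](w),f_{t+1})$ (the ``$+1$'' absorbs the degree and ports known at time $0$). Then the message received through port $p$ in round $t$ is a function of $\view[t](u_p)$ together with the reverse port number, both of which are determined by $\view[t+1](u)$. This yields the proposition verbatim for every $t\leq T-1$, and for $t=T$ as well whenever $T\geq 2$ (since then $\view[T-1](u_p)=\view[T-1](u'_p)$ determines the neighbors' degrees); that covers every use of the proposition in the paper, where $T$ is always large. Your inductive step already contains all of this implicitly -- you only need to say explicitly that ``entering round $T$ in the same local state'' uses agreement of the neighbors' \emph{initial} states, which is available from $\view[T-1](u_p)=\view[T-1](u'_p)$ precisely because $T-1\geq 1$, and to flag (or restrict away) the degenerate case $T=1$.
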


Next, we define the notion of a {\em colored quotient graph}, which is a generalization of the
notion of quotient graph introduced in  \cite{YK3}. Given a colored network $(G,f)$, its colored quotient
graph $(Q,\overline{f})$ is defined as follows.  Nodes of $Q$ are equivalence classes of the above defined relation $\sim$. If $a$ and $b$ are two such classes, there is an edge joining $a$ and $b$
in $Q$, with port number $p$ at $a$ and $q$ at $b$, if and only if there is an edge joining nodes $u$ and
$v$ in $G$, with port number $p$ at $u$ and $q$ at $v$, where $u$ belongs to the class $a$ and $v$ belongs to the class $b$. (Hence, unlike $G$, the graph $Q$ can have self-loops and multiple
edges.) The function $\overline{f}$ is defined on all nodes of $Q$ by the formula
$\overline{f}(a)=f(u)$, where $u$ belongs to the class $a$. 

Finally, we give the formal definitions of the main problems $\problemTOP$ and $\problemLE$ considered in this paper. Both these problems are to be solved in an unknown colored graph.
For each of these problems, all nodes are given some common input $I$.
In order to solve the problem $\problemLE$, every node has to output a sequence of port numbers leading from this node to a single node, called the leader, if this task is possible to perform using input $I$;
otherwise, all nodes must output the answer ``unsolvable''. In order to solve the problem $\problemTOP$, every node $v$ has to output an isomorphic copy $C_v$ of the underlying graph, with all nodes labeled by distinct identifiers, and the node $v$ correctly marked in $C_v$,  if this task is possible to perform using input $I$;
otherwise, all nodes must output the answer ``unsolvable''.

\section{The algorithm and its analysis}  \label{sec:algorithm}

In this section we describe and analyze an algorithm for solving the problems $\problemTOP$ and $\problemLE$, assuming that some bound $k$ on the size of one of the colors is known.
This color will be denoted by $\ourColor$.
Recall that, by definition, there exists at least one node with color $\ourColor$ in the network.

The algorithm aims at computing the colored quotient graph.
Once this is achieved, the solutions to $\problemTOP$ and $\problemLE$ will follow easily.
The task of computing the colored quotient graph is divided into three procedures called $\algorithmTest$, $\algorithmView$ and $\algorithmQuotient$.
Procedure $\algorithmView$ computes the truncated colored view of the executing node up to a certain depth $l$.
This depth $l$ depends on the answers returned by several calls to $\algorithmTest$ in procedure $\algorithmView$.
Finally, procedure $\algorithmQuotient$ uses the view and, by further extending it to an appropriate depth, obtains the colored quotient graph.
We now give the detailed description of the above procedures.
They are all formulated for an executing node $w$.
We write $\view$ and $\view[l]$ instead of $\view(w)$ and $\view[l](w)$, respectively.

\medskip
Procedure $\algorithmTest$ uses the following notion of distance between colors and nodes.
Given a truncated colored view $(\view[l],f)$ and a node $v$ belonging to it, the \emph{distance from $v$ to color $\ourColor$} is the length of the shortest path in $\view[l][v]$ that connects $v$ with some node with color $\ourColor$.
Given a view $\view[l]$ and a node $v$ belonging to it, we say that $v$ has a \emph{copy in} $\view[i]$, $i\leq l$, if there exists $v'$ in $\view[i]$ such that $v'$ and $v$ represent the same node of the network.
Additionally, if $|\viewPath{\view[l]}{v'}|<|\viewPath{\view[l]}{v}|$, then we say that $v$ has a \emph{high copy} in $\view[l]$.

Before giving the pseudocode of procedure $\algorithmTest$, we describe its high-level idea.
Given a truncated colored view $(\view[l],f)$ and a node $v$ in it as an input, the goal of this procedure is to return $\returnTrue$ if $v$ has a high copy in $\view[l]$.
This is done by exploiting the only tool available to the algorithm: counting nodes in $\view[l]$ of color $\ourColor$.
Namely, for each node $u$ in $\viewPath{\view[l]}{v}$, the procedure computes the distance from $u$ to color $\ourColor$ in $\view[l][u]$ and then it takes the maximum over all such distances, denoted by $d'$.
If no node with the color $\ourColor$ is observed in $\view[l][u]$, then $d_u$ is set to $\infty$ which results in $d'=\infty$ and the procedure returns $\returnFalse$.
As proven below (cf. Lemma~\ref{lem:test:bound}) checking whether the distance from the root to $v$ in $\view[l]$ is at least $2(k+1)(d'+1)$ is sufficient for procedure $\algorithmTest$ to return $\returnTrue$.
Note that there exist networks such that $\card{\viewPath{\view[l]}{v}}$ is smaller than $2(k+1)(d'+1)$ but $v$ still has a high copy in $\view[l]$.
However, in such cases the procedure $\algorithmTest$ is unable to certify that and, thanks to later calls to $\algorithmTest$ in our algorithm, several descendants of $v$ will be recognized to have high copies.

\begin{algorithm} \caption{$\algorithmTest( \view[l],f,v )$}
\begin{algorithmic}
\REQUIRE Truncated colored view $(\view[l],f)$, $l\geq 1$, a node $v$ in $\view[l]$.
\ENSURE $\returnTrue$ or $\returnFalse$.

\FORALL{$u$ in $\viewPath{\view[l]}{v}$}
   \STATE $d_u \leftarrow$ distance from $u$ to color $\ourColor$ in $(\view[l][u],f)$;
   \STATE (Possibly $d_u=\infty$ if $\ourColor$ does not appear in $(\view[l][u],f)$.)
\ENDFOR
\STATE $d' \leftarrow \max\{d_u \st u\textup{ in }\viewPath{\view[l]}{v}\}$
\IF{ $\card{\viewPath{\view[l]}{v}} \geq 2(k+1)(d'+1)$ }
   \RETURN $\returnTrue$
\ELSE
   \RETURN $\returnFalse$
\ENDIF
\end{algorithmic}
\end{algorithm}

\begin{lemma} \label{lem:test:bound}
Let $k$ be an upper bound on the size of the color $\ourColor$ in the network {and let $v$ be a node in the network}.
Let $d'$ be the maximum distance from a node $u$ to color $\ourColor$ in $\view[l](u)$ taken over all nodes $u$ in $\viewPath{\view[l]}{v}$.
If $\card{\viewPath{\view[l]}{v}} \geq 2(k+1)(d'+1)$, then $v$ has a high copy in $\view[l]$.
\end{lemma}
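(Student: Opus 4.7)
The plan is to exploit the bound on the number of $\ourColor$-colored nodes via a pigeonhole argument along the path $\viewPath{\view[l]}{v}$. Set $L = \card{\viewPath{\view[l]}{v}}$, and note that the hypothesis $L \geq 2(k+1)(d'+1)$ forces $d' < \infty$. I would select $k+1$ waypoints $u_0, u_1, \ldots, u_k$ lying on $\viewPath{\view[l]}{v}$ at depths $2i(d'+1)$ for $i = 0, \dots, k$; they all fit since $2k(d'+1) < L$. For each $u_i$, the definition of $d'$ supplies an $\ourColor$-colored node $a_i$ inside $\view[l][u_i]$ at distance $d_{u_i} \leq d'$ from $u_i$ (and this node really lies in $\view[l]$ because $l \geq L \geq 2k(d'+1)+2(d'+1)$ leaves room for $d_{u_i} \leq d'$ extra levels below $u_i$). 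Since the network contains at most $k$ nodes of color $\ourColor$, two of the nodes $a_0, \ldots, a_k$, say $a_i$ and $a_j$ with $i < j$, must represent the same network node.

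Given this coincidence, I would build a walk in the network from the network node represented by the root to the network node represented by $v$ by concatenating four pieces: (i) the walk from the root to $u_i$, of length $2i(d'+1)$; (ii) the walk from $u_i$ down to $a_i$, of length $d_{u_i}$; (iii) the walk from $u_j$ down to $a_j$, read in reverse, of length $d_{u_j}$---this is legal because $a_i$ and $a_j$ are identified with the same network node; and (iv) the walk from $u_j$ to $v$, of length $L - 2j(d'+1)$. Using $j-i \geq 1$ and $d_{u_i}, d_{u_j} \leq d'$, the total length is at most
\[
  2i(d'+1) + 2d' + \bigl(L - 2j(d'+1)\bigr) \;=\; L - 2(j-i)(d'+1) + 2d' \;\leq\; L - 2.
\]

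To finish, I would invoke the standard fact that $\view$ is the rooted universal cover of the port-labeled network at the root's network node: every walk from that node in the network lifts uniquely to a path from the root in $\view$, whose endpoint represents the endpoint of the walk. Applied to the walk above (which has length at most $L-2 \leq l$), this lift is entirely contained in $\view[l]$ and produces a node $v'$ at depth at most $L-2$ that represents the same network node as $v$. Thus $\card{\viewPath{\view[l]}{v'}} < \card{\viewPath{\view[l]}{v}}$, i.e., $v'$ is a high copy of $v$.

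The step I expect to be the most delicate is the universal-cover lifting argument: it is what legitimizes treating (i)--(iv) as a single walk in the underlying network (even though the corresponding view-nodes $u_i, a_i, a_j, u_j$ live in disjoint subtrees of $\view[l]$), and what guarantees that the lifted endpoint $v'$ really represents the same network node as $v$. Once that correspondence is set up cleanly, the rest reduces to the pigeonhole on $\ourColor$-colored nodes and the elementary length accounting shown above.
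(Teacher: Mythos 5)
Your proof is correct and follows essentially the same strategy as the paper's: place $k+1$ equally spaced waypoints on $\viewPath{\view[l]}{v}$, attach to each a nearby $\ourColor$-colored witness, apply the pigeonhole principle against the bound $k$, and splice the path at the two coinciding witnesses to obtain a strictly shorter path to a copy. The only (cosmetic) difference is that the paper first passes to the ancestor $u$ at depth exactly $2(k+1)(d'+1)$ and shows $u$ has a high copy (which $v$ then inherits), whereas you carry the spliced walk all the way to $v$ directly.
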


\begin{proof}
Find an ancestor $u$ of $v$ in $\view[l]$ such that the length of the path from the root to $u$ is exactly $2(k+1)(d'+1)$.
Let $u_0,\ldots,u_{k+1}$ be such nodes in the path $\viewPath{\view[l]}{u}$ that $u_i$ is at distance $2i(d'+1)$ from the root.
Note that $u_{k+1}=u$.
For each $i\in\{1,\ldots,k+1\}$, let $z_i$ be a node with color $\ourColor$ at distance at most $d'$ from $u_i$ in $\view[l][u_i]$.
Such a node $z_i$ exists by definition of $d'$ and by the fact that $d'\neq\infty$, {for any} $i\in\{1,\ldots,k+1\}$.
The latter follows from $\card{\viewPath{\view[l]}{v}} \geq 2(k+1)(d'+1)$.

Since the size of color $\ourColor$ is at most $k$, there exist $i,i'\in\{1,\ldots,k+1\}$, $i<i'$, such that $z_i$ and $z_{i'}$ represent the same node of the network.
Consider the sequence of ports that is a concatenation of the sequences of ports of the following paths: $\viewPath{\view[l]}{z_i}$ and $\viewPath[z_{i'}]{\view[l]}{u_{k+1}}$.
This sequence of ports gives a valid path from the root to some node $u'$ in the colored view $\view$, since $z_i$ and $z_{i'}$ represent the same node of the network.
We bound the length $l'$ of $\viewPath{\view}{u'}$ as follows.
Note that the first part of $\viewPath{\view}{u'}$ of length $|\viewPath{\view}{z_i}|$ equals $\viewPath{\view}{z_i}$.
Thus, the length of this first part is at most
\[|\viewPath{\view[l]}{u_i}|+|\viewPath[u_i]{\view[l]}{z_i}| \leq 2i(d'+1) + d'.\]
The length of the second part, i.e., the length of $\viewPath[z_{i'}]{\view[l]}{u_{k+1}}$ is {at most}
\[|\viewPath[z_{i'}]{\view[l]}{u_{i'}}|+|\viewPath[u_{i'}]{\view[l]}{u_{k+1}}| \leq d' + 2(k+1-i')(d'+1).\]
Therefore, since $i<i'$, we obtain
\[|\viewPath{\view}{u'}|\leq 2d'+2i(d'+1)+2(k+1-i')(d'+1)<2(k+1)(d'+1).\]
Thus, $u'$ belongs to $\view[2(k+1)(d'+1)-1]$, or in other words, the node $u$ that belongs to level $2(k+1)(d'+1)$ of $\view[l]$ has a high copy in $\view[2(k+1)(d'+1)-1]$.
Since either $u=v$ or $u$ is an ancestor of $v$, we obtain that $v$ has a high copy in $\view[l]$, as required.
\end{proof}

\begin{corollary} \label{cor:test:bound}
Consider a truncated colored view $(\view[l],f)$ and a node $v$ in this view.
If procedure $\algorithmTest$ returns $\returnTrue$ for input $\view[l],f$ and $v$, then $v$ has a high copy in $\view[l]$.
\qed
\end{corollary}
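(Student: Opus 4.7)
The plan is to derive this directly from Lemma~\ref{lem:test:bound}. First I would trace the control flow of procedure $\algorithmTest$ and note that it returns $\returnTrue$ along exactly one branch: namely, when the quantity $d'=\max\{d_u \st u\textup{ in }\viewPath{\view[l]}{v}\}$ computed from the for-loop satisfies $\card{\viewPath{\view[l]}{v}}\geq 2(k+1)(d'+1)$. In particular, this inequality can hold only if $d'$ is finite, which is consistent with the convention that $d_u=\infty$ whenever color $\ourColor$ does not appear in $(\view[l][u],f)$ (any such $u$ would force $d'=\infty$ and cause the test to evaluate to false).

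Next I would observe that the quantity $d'$ produced by the procedure is exactly the quantity $d'$ appearing in the hypothesis of Lemma~\ref{lem:test:bound}: both are defined as the maximum, over all $u$ lying on $\viewPath{\view[l]}{v}$, of the distance from $u$ to color $\ourColor$ in the subtree $(\view[l][u],f)$. Therefore, whenever $\algorithmTest$ returns $\returnTrue$ the hypothesis of Lemma~\ref{lem:test:bound} is met, and the lemma immediately yields that $v$ has a high copy in $\view[l]$.

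There is no substantive obstacle here; the corollary is essentially a restatement of the lemma in terms of the operational behavior of the procedure. The only care needed is to verify that the quantity $d'$ computed inside the pseudocode agrees with the $d'$ used in the statement of Lemma~\ref{lem:test:bound}, which is a matter of reading off the definitions. Consequently the proof proposal amounts to two short sentences: quote the return condition from the pseudocode, then cite Lemma~\ref{lem:test:bound}.
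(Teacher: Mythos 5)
Your proposal is correct and matches the paper's treatment: the paper gives no separate argument for Corollary~\ref{cor:test:bound}, regarding it as immediate from Lemma~\ref{lem:test:bound} once one observes that the procedure's return condition is exactly the lemma's hypothesis (with the same $d'$). Your two-step reading of the pseudocode followed by citing the lemma is precisely that argument.
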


We have proved that if procedure $\algorithmTest$ returns $\returnTrue$, then this guarantees that the input node $v$ has a high copy.
However, for the correctness of our final algorithm we need to ensure that each infinite simple path in $\view$ originating from the root contains a node $v$ that has a high copy, and that this fact will be detected by procedure $\algorithmTest$.
Moreover, in order to bound the time of our final algorithm, we need to estimate the distance from such $v$ to the root, which is done in the next lemma.
\begin{lemma} \label{lem:test:finite}
Let $(\view[l],f)$ be a truncated colored view and let $v$ be a node in it.
If $l=2(k+1)(\diam+1)+\diam$ and $v$ belongs to level $2(k+1)(\diam+1)$ of $\view[l]$, then procedure $\algorithmTest$ executed for $\view[l],f$ and $v$ returns $\returnTrue$.
\end{lemma}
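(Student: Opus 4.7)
The plan is to show that the distance $d'$ computed by the procedure is at most $D$, after which the assumption $|\viewPath{\view[l]}{v}| = 2(k+1)(D+1)$ immediately yields $|\viewPath{\view[l]}{v}| \geq 2(k+1)(d'+1)$, so $\algorithmTest$ takes the $\returnTrue$ branch.

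First I would record the basic structural fact that for any node $u$ in $\view[l]$ representing a node $u'$ of the network, the subtree $\view[l][u]$ coincides with the truncated colored view $\view[l-|\viewPath{\view[l]}{u}|](u')$. This follows directly from the inductive definition of the view: the subtree hanging below a vertex at depth $d$ in $\view[l]$ is by construction the depth-$(l-d)$ truncation of the view of the underlying network vertex. Because of the uniform coloring of representatives by $f^*$, the same identification holds for the colored view.

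Next I would bound the depth available under any $u$ lying on $\viewPath{\view[l]}{v}$. Since $u$ is an ancestor of $v$ (or equals $v$), we have $|\viewPath{\view[l]}{u}| \leq |\viewPath{\view[l]}{v}| = 2(k+1)(D+1)$, so the subtree $\view[l][u]$ has depth at least
\[ l - |\viewPath{\view[l]}{u}| \;\geq\; l - 2(k+1)(D+1) \;=\; D. \]
Now I would use that $D$ is the diameter of the network and that some node of color $\ourColor$ exists in the network: hence from the network vertex $u'$ represented by $u$ there is a path of length at most $D$ to a node of color $\ourColor$. This path lifts (by the inductive view construction) to a walk of length at most $D$ in $\view[l][u]$ starting at $u$ and ending at some node with color $f^*$ equal to $\ourColor$. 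Hence $d_u \leq D$.

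Taking the maximum over all $u$ on $\viewPath{\view[l]}{v}$ gives $d' \leq D$ (in particular $d'$ is finite), so
\[ 2(k+1)(d'+1) \;\leq\; 2(k+1)(D+1) \;=\; |\viewPath{\view[l]}{v}|, \]
and therefore the test in $\algorithmTest$ succeeds and the procedure returns $\returnTrue$. The only delicate point is the first step — the identification of $\view[l][u]$ with a truncated view of $u'$ — since one must verify that climbing from $u$ back toward the root is not required to reach a color-$\ourColor$ representative; the diameter argument guarantees the existence of a purely downward witness, so this subtlety does not cause trouble.
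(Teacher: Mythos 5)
Your proposal is correct and follows essentially the same route as the paper's proof: bound $d'\leq \diam$ using the diameter and the existence of a node of color $\ourColor$, then observe that $\card{\viewPath{\view[l]}{v}} = 2(k+1)(\diam+1) \geq 2(k+1)(d'+1)$ forces the $\returnTrue$ branch. You are more explicit than the paper about why the extra $+\diam$ in $l$ guarantees that the color-$\ourColor$ witness is actually visible inside $\view[l][u]$, which is a worthwhile clarification but not a different argument.
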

\begin{proof}
For each node $u$ of the path $\viewPath{\view[l]}{v}$, the node of the network represented by it is at distance at most $\diam$ (in the network) from some node with color $\ourColor$.
This implies that $d_u\leq\diam$ for each such node $u$.
Thus, $d'\leq\diam$.
We obtain that
\[\card{\viewPath{\view[l]}{v}} = 2(k+1)(\diam+1) \geq 2(k+1)(d'+1),\]
which implies that procedure $\algorithmTest$ returns true.
\end{proof}

\medskip
Before describing the next procedure, we introduce some more notation.
A truncated colored view $(\view[l],f)$ \emph{covers the network} if, for each node $x$ of the network, there exists a node $u$ of $\view[l]$ such that $u$ represents $x$.
We define procedure $\communicate$ which sends the currently acquired truncated colored view $(\view[t],f)$ to all neighbors and receives the messages containing currently acquired colored views of the same depth $t$ from all neighbors.
Note that after all nodes have performed $\communicate$ $t$ times, each node can compute its truncated colored view $\view[t]$.

We now describe procedure $\algorithmView$.
Again, we start with an informal description.
In each iteration, the `while' loop increments the depth of the view currently stored at the executing node.
This is done by communicating with each neighbor.
The crucial part is to decide when to stop.
At some point, procedure $\algorithmView$ detects that the currently possessed truncated colored view $(\view[l],f)$ covers the network and this view is then returned.
The above is achieved (see Lemma~\ref{lem:covered} below for a proof) by maintaining a set $M$, that is initially empty, consisting of nodes having high copies.
Procedure $\algorithmView$ stops when each leaf of $\view[l]$ is in $M$, which guarantees that $\view[l]$ covers the network, as required.
\begin{algorithm} \caption{$\algorithmView$}
\begin{algorithmic}
\REQUIRE None.
\ENSURE Truncated colored view $(\view[l],f)$.

\STATE $l \leftarrow 1$
\STATE $M \leftarrow \emptyset$
\WHILE{there exists a leaf in $\view[l]$ that is not in $M$}
   \FORALL{$v$ in $\view[l]$}
      \IF{$\algorithmTest( \view[l],v )$ returns $\returnTrue$}
         \STATE Add to $M$ the node $v$ and all its descendants in $\view[l]$.
      \ENDIF
   \ENDFOR
   \STATE $\communicate$~$\quad$ \COMMENT{This extends $(\view[l],f)$ to $(\view[l+1],f)$.}
   \STATE $l \leftarrow l+1$
\ENDWHILE
\RETURN $(\view[l],f)$
\end{algorithmic}
\end{algorithm}

\begin{lemma} \label{lem:covered}
Procedure $\algorithmView$ returns the truncated colored view $(\view[l],f)$ of the executing node, such that $\view[l]$ covers the network and $l\leq 2(k+1)(\diam+1)+\diam$.
\end{lemma}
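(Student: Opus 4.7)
My plan is to establish the two assertions of the lemma in turn: the bound $l \le L^{\ast} := 2(k+1)(D+1)+D$ on the returned depth, and the covering property of the returned view.

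For the bound I will invoke Lemma~\ref{lem:test:finite}. Once the procedure has extended its view up to depth $L^{\ast}$, the \textbf{for}-loop of $\algorithmView$ calls $\algorithmTest(\view[L^{\ast}], f, v)$ for every node $v$ of $\view[L^{\ast}]$. Applied to any $v$ at level $2(k+1)(D+1)$, Lemma~\ref{lem:test:finite} returns $\returnTrue$, so $v$ together with all of its descendants in $\view[L^{\ast}]$ is inserted into $M$. Since every leaf of $\view[L^{\ast}]$ lies at level $L^{\ast} > 2(k+1)(D+1)$, it descends from such a $v$ and therefore belongs to $M$. The \textbf{while}-condition then fails and the loop terminates with $l \le L^{\ast}$.

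For the covering property I will induct on the network distance $d(w, x)$ between the executing node $w$ and an arbitrary network node $x$. The base case $d(w,x)=0$ is immediate, since the root of $\view[l]$ represents $w$. For the inductive step, pick a neighbor $y$ of $x$ with $d(w,y)=d(w,x)-1$, and let $p$ be the port at $y$ leading to $x$. By the inductive hypothesis $y$ has a representative $\tilde{y} \in \view[l]$. If $\tilde{y}$ is not a leaf of $\view[l]$, then its child via port $p$ lies in $\view[l]$ and represents $x$. If $\tilde{y}$ is a leaf, then by the exit condition $\tilde{y}\in M$, and hence some ancestor $u$ of $\tilde{y}$ in $\view[l]$ was added to $M$ because $\algorithmTest$ returned $\returnTrue$ on it during some iteration. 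Corollary~\ref{cor:test:bound} then yields a high copy $u'$ of $u$ in $\view[l]$: $u'$ represents the same network node as $u$ and sits at a strictly smaller level. Walking from $u'$ along the same port sequence that connects $u$ to $\tilde{y}$ produces a node $\tilde{y}'\in\view[l]$ which represents $y$ and whose level is strictly less than $l$; in particular $\tilde{y}'$ is not a leaf, so its child via port $p$ lies in $\view[l]$ and represents $x$.

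The main obstacle will be the leaf case of the inductive step, where I have to convert membership of $\tilde{y}$ in $M$ into the existence of an alternative, shallower representative of $y$. Two ingredients handle this: Corollary~\ref{cor:test:bound} certifies the high copy $u'$ whenever $u$ passed the test, and the fact that any two nodes representing the same network node root isomorphic port-labelled subtrees in the view guarantees that a port sequence transported from $u$ to $u'$ terminates at a representative of the same target network node. Together with the straightforward termination argument above, this closes the proof.
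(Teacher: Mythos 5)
Your proof is correct, and it splits the same way the paper's does: the depth bound via Lemma~\ref{lem:test:finite} is handled identically in both (and both share the same harmless imprecision about the \textbf{while}-condition being evaluated before the \textbf{for}-loop, which could cost one extra iteration beyond $2(k+1)(D+1)+D$). For the covering property, however, you take a genuinely different route. The paper argues by minimal counterexample over the \emph{infinite} view $\view$: it picks a node $u$ of $\view$ without a copy in $\view[l]$ that is closest to the root, uses the fact that $u$'s ancestor at level $l$ lies in $M$ to invoke Corollary~\ref{cor:test:bound}, transports $u$ to a strictly shallower copy $u'$, and derives a contradiction from the minimality of $u$'s level. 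You instead run a well-founded induction on the network distance $d(w,x)$, with an explicit ``extend if the representative is internal, reflect via a high copy if it is a leaf'' step. The two arguments are logically interchangeable (extremal descent versus induction) and both hinge on exactly the same two ingredients: Corollary~\ref{cor:test:bound} and the fact that two view nodes representing the same network node root isomorphic port-labelled subtrees. What your version buys is that it targets network nodes directly, so the conclusion ``$\view[l]$ covers the network'' falls out immediately, whereas the paper proves that every node of $\view$ has a copy in $\view[l]$ and leaves implicit the (easy, connectivity-based) step that this yields coverage of all network nodes; what the paper's version buys is brevity, since the extremal choice replaces your case analysis in the inductive step.
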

\begin{proof}
Lemma~\ref{lem:test:finite} implies that there exists $l\leq 2(k+1)(\diam+1)+\diam$ such that all leaves of $\view[l]$ are in $M$ and hence the number of iterations of the `while' loop of procedure $\algorithmTest$ is at most $l$.
The latter relies on the fact that if a node having a high copy is detected by procedure $\algorithmTest$, then procedure $\algorithmView$ adds to $M$ this node with all its descendants in the current truncated view.

Consider the view $\view$ of the executing node, and let $u$ be any node in $\view$.
We argue that $u$ has a copy in $\view[l]$.
Suppose for a contradiction that this is not the case and select $u$ to be a node that does not have a copy in $\view[l]$ and is closest to the root in $\view$.
Let $l'$ be the level of $u$ in $\view$.
(Clearly $l'>l$.)
Since $\levelInView{\view[l]}{l}\subseteq M$, there exists an ancestor $v$ of $u$ such that $v\in\levelInView{\view[l]}{i}$, $i\leq l$, and procedure $\algorithmTest$ returns $\returnTrue$ when executed for $\view[l],f$ and $v$.
By Corollary~\ref{cor:test:bound}, $v$ has a copy $v'$ in $\view[i-1]$.
Thus, $u$ has a copy $u'$ in $\view[l'-1]$.
But then, by the minimality of $l'$, $u'$ has a copy in $\view[l]$, which is also a copy of $u$.
This is a contradiction because, by definition, $u$ and $u'$ represent the same node of the network.
\end{proof}

\medskip
Procedure $\algorithmQuotient$ computes the colored quotient graph $(\cQ,g)$ of the network, provided that a colored view $(\view[l],f)$ that covers the network is given as an input.
This is done by finding the minimum index $i$ such that $\partition{i-1}=\partition{i}$.
Note that this requires that each node learn its colored view till depth $l+i$, by exchanging messages with its neighbors.
\begin{algorithm} \caption{$\algorithmQuotient( \view[l],f )$}
\label{alg:ComputeQuotientGraph}
\begin{algorithmic}
\REQUIRE Truncated colored view $(\view[l],f)$, $l\geq 1$.
\ENSURE  The colored quotient graph $(\cQ,g)$.

\STATE $\partition{-1} \leftarrow \emptyset$
\STATE $\partition{0} \leftarrow \{ (\view[0](v),f) \st v\in\view[l] \}$
\STATE $i\leftarrow 0$
\WHILE{$\partition{i}\neq\partition{i-1}$}
   \STATE $i \leftarrow i+1$
   \STATE $\communicate$
   \STATE Compute $\partition{i}$
\ENDWHILE
\STATE Compute the labeled quotient graph $(\cQ,g)$ using $\partition{i}$ and $\view[l+i]$
\RETURN $(\cQ,g)$
\end{algorithmic}
\end{algorithm}

We prove the following.
\begin{lemma} \label{lem:computeQG}
Let $(\view[l],f)$ be the truncated colored view  computed by procedure $\algorithmView$.\\
Procedure $\algorithmQuotient$ called for $(\view[l],f)$ correctly computes the colored quotient graph of the network.
\end{lemma}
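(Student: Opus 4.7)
The plan is to show three things in sequence: (i) the partitions $\partition{i}$ computed by the procedure correspond to the true partitions $\Pi_i$ of network nodes by $\sim_i$, (ii) the stopping condition correctly detects when $\partition{i}=\Pi$, and (iii) the final view $\view[l+i]$ contains enough information to reconstruct the colored quotient graph.

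First I would fix the correspondence. By Lemma~\ref{lem:covered}, $\view[l]$ covers the network, so every network node $x$ has at least one representative $v\in\view[l]$, and for any such $v$ we have $\view[i](v)=\view[i](x)$ as truncated colored views (this equality is how views are defined via the walk of ports). Two nodes $v,v'\in\view[l]$ representing network nodes $x,x'$ therefore satisfy $(\view[i](v),f)=(\view[i](v'),f)$ if and only if $x\sim_i x'$. Hence the partition $\partition{i}$ produced from the representatives in $\view[l]$ projects bijectively onto $\Pi_i$. To compute $\partition{i}$ the executing node $w$ must know $\view[i](v)$ for every $v\in\view[l]$, which requires knowledge of $\view[l+i](w)$; this is exactly what the $i$ calls to $\communicate$ inside the while loop accumulate.

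Next I would handle termination and correctness of the stopping rule. Since $\partition{i+1}$ is always a refinement of $\partition{i}$ and the number of equivalence classes is bounded by the number of distinct representatives in $\view[l]$, the chain must stabilize after finitely many steps. When $\partition{i}=\partition{i-1}$ first happens, Proposition~\ref{stop}, combined with the correspondence in paragraph one, yields $\partition{i}=\Pi$, so the loop halts with the true equivalence classes of $\sim$.

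Finally, I would reconstruct the colored quotient graph from $\partition{i}$ and $\view[l+i]$. For every class $a\in\partition{i}$, pick a representative $v\in\view[l]$; define $g(a)=f(v)$, which is well-defined because all nodes in a class share the same color. For every port number $p$ at $v$ of degree $d_v$, locate the child $v_p$ of $v$ in $\view[l+1]\subseteq\view[l+i]$, determine the class $b\in\partition{i}$ to which $v_p$ belongs, and read the port number $q$ at $v_p$ corresponding to the edge $(v,v_p)$; insert the edge $\{a,b\}$ with ports $(p,q)$ into $\cQ$. Since every equivalence class has a representative inside $\view[l]$ (not just in $\view[l+i]$), its children are present in $\view[l+i]$ and so all incident edges are recovered. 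The main obstacle is the bookkeeping in paragraph one: one must carefully argue that the equivalence used by the procedure on nodes of $\view[l]$ really coincides with $\sim_i$ on the network, because only then does Proposition~\ref{stop} apply. Once that identification is in hand, the rest is routine.
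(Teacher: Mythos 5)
Your proof is correct and follows essentially the same route as the paper's: use Lemma~\ref{lem:covered} to ensure every network node is represented in $\view[l]$, invoke Proposition~\ref{stop} to conclude that the stabilized partition equals $\Pi$, and then read off the quotient graph's colors, edges and ports from representatives. You simply spell out in more detail the identification between the partition computed on nodes of $\view[l]$ and the true partitions $\Pi_i$, and the well-definedness of the reconstruction, which the paper leaves implicit.
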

\begin{proof}
By Lemma~\ref{lem:covered}, $\view[l]$ covers the network.
Thus, $\partition{0}$ contains all nodes of the network.
By Proposition~\ref{stop}, the partition $\partition{i}$ obtained in the last iteration of the `while' loop of procedure $\algorithmQuotient$ equals $\partition{}$.
Hence, this partition $\partition{i}$ is the set of all nodes of the colored quotient graph.
This implies that the quotient graph can be computed on the basis of $\partition{i}$ and $\view[l]$ as follows.
The color of a node of the quotient graph is set to the color of its elements in the truncated colored view.
The edges and port numbers are added as in the definition of the quotient graph.
\end{proof}

In the formulation of our main algorithm we will use the following integer that each node can compute once it has the colored quotient graph $(\cQ,g)$.
For any node $u$ of the quotient graph, let $\commonTime{u}$ be the sum of running times of procedures $\algorithmView$ and $\algorithmQuotient(\view[l],f)$, where $(\view[l],f)$ is computed by $\algorithmView$.
Let $\commonTimeMax$ be the maximum of $\commonTime{u}$ over all nodes $u$ of the quotient graph.

We are ready to state our main algorithm for solving the problems $\problemLE$ and $\problemTOP$.
We formulate it as a single procedure, since all steps leading to the computation of the colored quotient graph are identical in both cases.

\floatname{algorithm}{Algorithm}
\begin{algorithm} \caption{$\algorithmMain(k,\ourColor)$}
\begin{algorithmic}
\REQUIRE An upper bound $k$ on the size of color $\ourColor$.
\ENSURE  For $\problemLE$ --- a sequence of port numbers leading from the executing node to the leader, or $\returnFailure$, if leader election is impossible.
         For $\problemTOP$ --- the topology of the network, or $\returnFailure$, if topology recognition is impossible.

\STATE $(\view[l],f) \leftarrow \algorithmView$
\STATE $(\cQ,g) \leftarrow \algorithmQuotient( \view[l],f )$
\STATE Perform $\communicate$ $\commonTimeMax$ times.
\IF{the number of nodes with color $\ourColor$ in $(\cQ,g)$ is at most $\lfloor k/2\rfloor$ and $\cQ$ is not a tree}
   \RETURN $\returnFailure$
\ELSE
   \STATE For $\problemLE$ --- return a sequence of port numbers of the path $\viewPath{\view[l]}{v}$, where $v$ (which is the leader) corresponds to the node of $(\cQ,g)$ whose colored view is lexicographically smallest (if the executing node is the leader, then the path is empty).
   \STATE For $\problemTOP$ --- return $(\cQ,g)$.
\ENDIF
\end{algorithmic}
\end{algorithm}

\begin{theorem} \label{thm:main}
If a bound $k$ on the size of a given color $\ourColor$ is provided as an input, then Algorithm $\algorithmMain$ correctly solves problems $\problemLE$ and $\problemTOP$ in time $O(k\diam+\diam\log(n/\diam))$, where $n$ is the size of the network and $D$ is its diameter.
\end{theorem}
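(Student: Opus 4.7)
The plan is to establish correctness in two stages---first that every node computes the same colored quotient graph $(\cQ,g)$, then that the boolean test correctly decides feasibility of $\problemLE$ and $\problemTOP$---and finally to bound the round complexity. For the quotient-graph computation I would invoke Lemmas~\ref{lem:covered} and~\ref{lem:computeQG} directly: $\algorithmView$ produces a truncated colored view $(\view[l],f)$ covering the network, and $\algorithmQuotient$ then reconstructs $(\cQ,g)$. Because the quotient graph depends only on the network and the coloring, every node obtains the same $(\cQ,g)$; the $\commonTimeMax$ synchronizing rounds merely align all nodes on a common global round at which the conditional is evaluated.

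The heart of the correctness argument is the feasibility decision. Let $\sigma$ denote the common size of the equivalence classes of $\sim$ (well defined by the result from \cite{YK3} recalled in Section~\ref{sec:feasibility}), and let $c$ be the number of color-$\ourColor$ nodes of $\cQ$, so that the network contains exactly $c\sigma$ nodes of color $\ourColor$ and hence $c\sigma\le k$. If $c>\lfloor k/2\rfloor$, we immediately get $\sigma=1$; and if $\cQ$ is a tree, we again obtain $\sigma=1$ from the standard covering-space fact that every connected covering of a tree is the tree itself. In both subcases all colored views are distinct, and the algorithm's outputs---the topology read off from the portion of $(\view[l],f)$ already covering the network, and the port sequence of $\viewPath{\view[l]}{v}$ for the unique $v$ whose colored view is lexicographically smallest---are correct. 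When instead $\cQ$ is not a tree and $c\le\lfloor k/2\rfloor$, I would exhibit a second colored network $(G',f')$ that is a different connected covering of $\cQ$ and still has at most $k$ color-$\ourColor$ nodes: if $\sigma=1$ we take $G'$ to be a non-trivial double covering of $\cQ$, with port labels and colors lifted through the covering map (which exists because $\cQ$ contains a cycle) and having $2c\le k$ such nodes; if instead $\sigma\ge 2$ we take $G'$ to be $\cQ$ itself, which has only $c\le k$ such nodes and differs in size from the actual network. Both networks share the same universal cover and therefore induce identical views at corresponding nodes; by Proposition~\ref{prop:algoView} any algorithm produces the same output on both, and since no single topology map or leader path can be correct for two non-isomorphic networks simultaneously, $\problemLE$ and $\problemTOP$ are unsolvable under input $k$, making $\returnFailure$ the correct output.

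For the running time, $\algorithmView$ executes at most $l\le 2(k+1)(D+1)+D=O(kD)$ rounds by Lemma~\ref{lem:covered}. The while loop of $\algorithmQuotient$ performs one $\communicate$ per iteration and terminates once $\partition{i}=\partition{i-1}$, hence by Proposition~\ref{stop} when $\partition{i}=\partition{}$; combining the colored version of Proposition~\ref{trunc} with the monotone refinement of $\{\partition{t}\}_{t\ge 0}$ shows that this happens for some $i\in O(D\log(n/D))$. The final $\communicate$ phase adds $\commonTimeMax$ rounds, which is of the same order, yielding the total $O(kD+D\log(n/D))$. The main obstacle I anticipate is handling the unsolvability subcase cleanly---specifically, exhibiting a second colored network with the same views yet non-isomorphic to the original. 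This is precisely where the hypothesis that $\cQ$ is not a tree is used, since only non-trees admit non-trivial connected coverings; everything else amounts to bookkeeping on top of the lemmas already established in this section.
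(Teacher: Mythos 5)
Your proposal follows essentially the same route as the paper: compute the quotient graph via Lemmas~\ref{lem:covered} and~\ref{lem:computeQG}, justify the extra $\commonTimeMax$ rounds by the fact that nodes may finish at different local times, decide feasibility by the $\lfloor k/2\rfloor$/tree test, and bound the time by $l+i+\commonTimeMax$ exactly as in the paper. The only difference is that where the paper defers the unsolvability argument to \cite{YK3} (asserting the existence of two non-isomorphic networks with quotient $(\cQ,g)$ and at most $k$ nodes of color $\ourColor$), you unpack it via explicit covering constructions; your $c>\lfloor k/2\rfloor\Rightarrow\sigma=1$ and tree~$\Rightarrow\sigma=1$ observations are correct and make the solvable branch cleaner than the paper's phrasing.

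One soft spot in your unpacking: in the subcase $\sigma\ge 2$ you take $G'=\cQ$ itself as the second witness network, but the paper explicitly allows $\cQ$ to have self-loops and multiple edges, in which case $\cQ$ is not a legal (simple) network and cannot serve as a witness. For $\problemLE$ this is harmless --- when $\sigma\ge 2$ two distinct nodes of the actual network already share a colored view, so no second network is needed (this is the paper's argument). For $\problemTOP$ you would instead need a connected cover of $\cQ$ of some multiplicity $m\neq\sigma$ with $cm\le k$ that is simple, which is exactly the kind of case analysis \cite{YK3} supplies; as written, your witness does not cover that corner.
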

\begin{proof}
First note that Algorithm $\algorithmMain$ correctly computes the colored quotient graph.
Indeed, by Lemma~\ref{lem:computeQG}, Algorithm $\algorithmMain$ obtains the colored quotient graph $(\cQ,g)$ as a result of the call to procedure $\algorithmQuotient$.
This is ensured by the fact that each node performs at least $\commonTimeMax$ calls to $\communicate$ and therefore every node $u$ can compute $\view[\commonTime{u}]$, which is enough to compute the colored quotient graph, by definition of $\commonTime{u}$.

Once the colored quotient graph is computed by all nodes of the network, the correctness of Algorithm $\algorithmMain$ essentially follows from \cite{YK3}.
For completeness we include the short argument.

Assume that the colored quotient graph has at most $\lfloor k/2\rfloor$ nodes with color $\ourColor$ and it is not a tree (i.e., a graph {that has} cycles, multiple edges or self-loops).
Then, topology recognition cannot be solved since there exist two non-isomorphic networks of size $2\lfloor k/2\rfloor$ having $(\cQ,g)$ as a quotient graph.
Any potential topology recognition algorithm in these networks must have the same execution for each pair of nodes with the same colored views and thus such an algorithm must be incorrect.
Hence, Algorithm $\algorithmMain$ correctly returns $\returnFailure$ for the problem $\problemTOP$.
For the problem of leader election, take any network of size $2\lfloor k/2\rfloor$ having $(\cQ,g)$ as a quotient graph.
Two distinct nodes $u$ and $v$ in this network have the same colored view.
Thus, any potential leader election algorithm incorrectly elects at least two leaders in this network.
Hence, Algorithm $\algorithmMain$ correctly returns $\returnFailure$ for the problem $\problemLE$ as well.

Otherwise, i.e., if the colored quotient graph has more than $\lfloor k/2\rfloor$ nodes with color $\ourColor$ or it is a tree, then the network is isomorphic to $(\cQ,g)$ and hence Algorithm $\algorithmMain$ gives a correct solution to the problem $\problemTOP$.
As for leader election, each node has a unique colored view under this assumption.
Hence, the node with the lexicographically smallest colored view is unambiguously elected as the leader by each node.

It remains to bound the time of computation.
It is at most $l+i+\commonTimeMax$, where $l$ and $i$ are the numbers of iterations of the `while' loop of procedure $\algorithmView$ and $\algorithmQuotient$, respectively.
By Lemma~\ref{lem:test:finite} and the formulation of procedure $\algorithmView$, $l\in O(k\diam)$.
By Proposition~\ref{trunc}, $i\in O(\diam\log(n/\diam))$.
Thus, by definition, $\commonTimeMax\in O(k\diam+\diam\log(n/\diam))$, which completes the proof.
\end{proof}

\begin{corollary} \label{cor:main}
Let $(G,f)$ be a colored network and let $k$ be a bound on the size of color $\ourColor$.
If all nodes of $(G,f)$ have pairwise different colored views and the size of color $\ourColor$ is strictly greater than $\lfloor k/2\rfloor$, then topology recognition and leader election are possible in $G$.
\qed
\end{corollary}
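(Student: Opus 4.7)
The plan is to deduce the corollary directly from Theorem~\ref{thm:main} by showing that, under its hypotheses, Algorithm $\algorithmMain(k,\ourColor)$ does not return $\returnFailure$ and in fact produces correct outputs for both $\problemTOP$ and $\problemLE$. The key observation is that the assumption ``all nodes of $(G,f)$ have pairwise different colored views'' forces every equivalence class of $\sim$ to be a singleton, so the colored quotient graph $(\cQ,g)$ computed by procedure $\algorithmQuotient$ (as guaranteed by Lemma~\ref{lem:computeQG}) is isomorphic to $(G,f)$ itself.

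From this isomorphism one immediately reads off that the number of nodes of color $\ourColor$ in $(\cQ,g)$ equals the size of color $\ourColor$ in $G$, which is assumed to be strictly greater than $\lfloor k/2\rfloor$. Consequently, in the conditional statement of Algorithm $\algorithmMain$, the first conjunct of the ``if'' test fails, so the algorithm does not return $\returnFailure$; instead it enters the ``else'' branch.

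In the ``else'' branch, for $\problemTOP$ the algorithm outputs $(\cQ,g)$, which is a labeled isomorphic copy of $(G,f)$ with the executing node correctly marked (since each node of $G$ corresponds to a unique equivalence class in $(\cQ,g)$), so topology recognition is solved. For $\problemLE$, since all colored views are pairwise distinct, there is a unique lexicographically smallest colored view, so the leader is unambiguously defined, and every node can output the port sequence of the path $\viewPath{\view[l]}{v}$ to this leader. Thus both problems are solved, which is exactly the statement of the corollary.

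The main (and essentially only) step requiring care is justifying the isomorphism between $(G,f)$ and its colored quotient graph under the distinct-views hypothesis; once this is established, the correctness proof of Theorem~\ref{thm:main} does the rest. There is no hard obstacle here — the corollary is precisely the specialization of the theorem's positive case to the hypothesis that views are already distinct.
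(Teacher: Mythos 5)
Your proposal is correct and follows essentially the same route as the paper: the corollary is exactly the specialization of the positive (``else'') branch in the proof of Theorem~\ref{thm:main}, using the observation that pairwise distinct colored views make every class of $\sim$ a singleton, so $(\cQ,g)$ is isomorphic to $(G,f)$ and the failure test cannot trigger. Nothing is missing.
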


\section{Negative results}

In this section we present our negative results. The first of them {is an impossibility result indicating} that if no upper bound on the size of any color is given to nodes, then problems $\problemTOP$ and $\problemLE$ must have answer ``unsolvable'', even if other restrictions on the possible sizes of colors are known. In order to express this result in full generality, we formalize such possible restrictions as a set $R\subseteq \nats^r$, where $\nats$ denotes the set of positive integers and $r$ is the number of colors.
We translate the property that no upper bound on the size of any color is known, to the statement
that for any point $(x_1,x_2,\dots , x_r)\in \nats^r $ there exists a point $(y_1,y_2,\dots , y_r)$  in the restriction set $R$, such that $y_i \geq x_i$, for all $i \leq r$.
Such a set $R$ will be called \emph{unbounded}.
(An example of an input defining an unbounded restriction set is: there are three colors and the sizes of all of them are prime integers.)
To make the impossibility result even stronger, we prove that it holds even for a very simple class of networks: on rings. 

\begin{proposition}
Let $(C,g)$ be any colored ring.
Consider an input $I$ defining an unbounded restriction set $R$.
Then, problems $\problemTOP$ and $\problemLE$ must have answer ``unsolvable''.
\end{proposition}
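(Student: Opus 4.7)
My plan is an indistinguishability argument based on Proposition~\ref{prop:algoView}. Suppose for contradiction that some algorithm $A$ with input $I$ halts in time $T$ on $(C,g)$ and outputs something other than ``unsolvable'' --- either a correct topology (for $\problemTOP$) or a correct leader path at every node (for $\problemLE$). I will construct a colored ring $(C',g')$ whose color distribution lies in $R$, which is not isomorphic to $(C,g)$, and in which the (first copy of the) node $v$ has the same truncated colored view at depth $T$ as $v$ has in $(C,g)$. Proposition~\ref{prop:algoView} will then force $A$ to give in $(C',g')$ the same outputs as in $(C,g)$, which will be wrong there and contradict the correctness of $A$ on the $R$-consistent network $(C',g')$.

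Let $n=|V(C)|$, let $(c_1,\dots,c_r)$ be the color distribution of $(C,g)$, and fix a node $v\in V(C)$. The construction has two steps. First, \emph{stretch}: form the colored ring $(C^{(m)},g^{(m)})$ by concatenating $m$ copies of $(C,g)$ with the identical port labeling in each copy, choosing $m$ large enough that $mn\geq 4T+4$. Because the (color, port) sequence around any copy of $v$ in $(C^{(m)},g^{(m)})$ is the $n$-periodic extension of the sequence around $v$ in $(C,g)$, a straightforward induction on depth shows that the truncated colored view $\view[T](v)$ at the first copy agrees with $\view[T](v)$ in $(C,g)$. Second, \emph{pad}: using unboundedness of $R$, pick $(y_1,\dots,y_r)\in R$ with $y_i\geq mc_i+1$ for every $i$, delete one edge of $(C^{(m)},g^{(m)})$ at ring-distance $\lfloor mn/2\rfloor$ from the first copy of $v$, and replace it by a path carrying $y_i-mc_i$ new nodes of color $i$, with arbitrary port labels; call the result $(C',g')$. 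Then $(C',g')$ has color distribution $(y_1,\dots,y_r)\in R$ and strictly more than $n$ nodes, so $(C',g')\not\cong(C,g)$, while the padding sits at ring-distance greater than $T$ from $v$, so no walk of length at most $T$ from $v$ in $(C',g')$ touches it; consequently $\view[T](v)$ in $(C',g')$ still equals $\view[T](v)$ in $(C,g)$.

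The contradictions now follow. By Proposition~\ref{prop:algoView}, $A$ produces at $v$ in $(C',g')$ exactly the output it produces at $v$ in $(C,g)$. For $\problemTOP$ this is a labeled isomorphic copy of $C$ with $v$ marked, which cannot be a labeled isomorphic copy of the non-isomorphic $C'$, so in $(C',g')$ the algorithm returns neither ``unsolvable'' nor the correct topology, violating correctness on the $R$-consistent network. For $\problemLE$, the same view-matching argument applied to the designated leader $u$ of $(C,g)$ shows that every copy of $u$ in $(C',g')$ at ring-distance greater than $T$ from the padding outputs the empty port sequence; taking $m$ large enough that at least two such copies of $u$ exist, $(C',g')$ would have at least two distinct leaders, again violating correctness. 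Hence $A$ cannot output anything other than ``unsolvable'' on $(C,g)$.

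The main obstacle is the small-$n$ regime: when $n$ is not large relative to $T$, no position in $(C,g)$ itself is at ring-distance greater than $T$ from $v$, so padding cannot be inserted ``far from $v$'' directly in $(C,g)$. The stretching step $(C,g)\mapsto(C^{(m)},g^{(m)})$, justified by the $n$-periodicity of the (color, port) sequence along the ring, is exactly what creates the room needed to insert padding without disturbing $\view[T](v)$, while simultaneously providing the slack in color counts needed to shift the total distribution into the unbounded set $R$.
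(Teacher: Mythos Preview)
Your argument is correct and essentially matches the paper's: both periodically repeat $(C,g)$ to create room, then pad with extra nodes so that the resulting color profile lies in $R$, and invoke Proposition~\ref{prop:algoView} for the contradiction. The paper merges your stretch and pad into a single construction and, for $\problemLE$, compares two shifted copies of a generic node (same port sequence, distinct targets) rather than two copies of the leader outputting the empty path, but these are cosmetic differences.
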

\begin{proof}
Take any algorithm, call it $A$, that correctly solves problem $\problemTOP$ or problem $\problemLE$ in any colored ring under input $I$.
Let $r$ be the number of different colors that appear in the network, and let $x_i$ be the size of color $i$, $i\in\{1,\ldots,r\}$.
Denote the nodes of $C$ by $v_0,\ldots,v_{n-1}$, where $n=x_1+\cdots+x_r$ and $v_i$ is adjacent to $v_{(i+1)\modulo n}$, for each $i\in\{0,\ldots,n-1\}$.
\begin{figure}[htb]
\begin{center}
\includegraphics[scale=0.9]{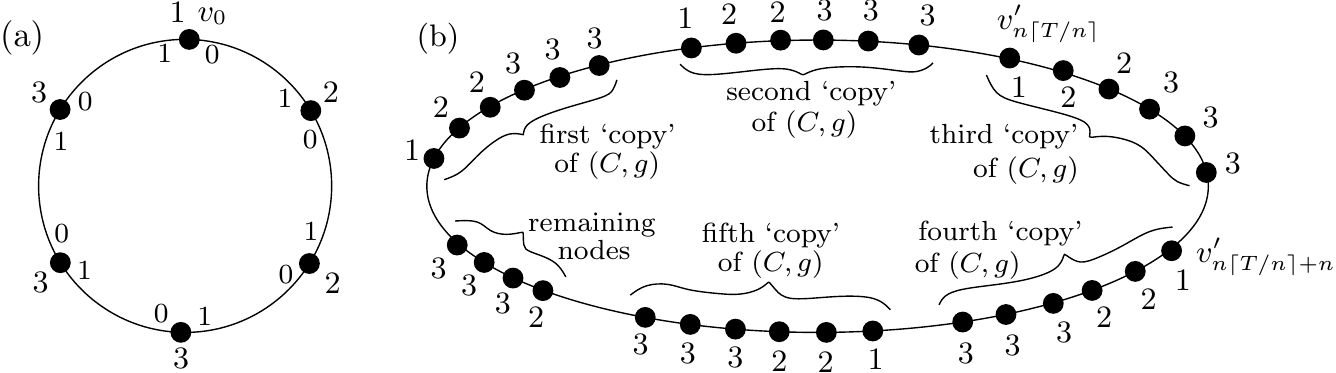}
\caption{(a) an example of a ring $(C,g)$ obtained from $(x_1,x_2,x_3)=(1,2,3)$, $n=6$; (b) the corresponding ring $(C',g')$ constructed for $T\in\{7,\ldots,12\}$ and $(x_1',x_2',x_3')=(5,11,18)$, $n'=34$.}
\label{fig:ring}
\end{center}
\end{figure}

Let $v_0$ be any node of $C$.
Let $T$ be the time after which $A$ produces the answer.
Since the restriction set $R$ is unbounded, there exists $(x_1',\ldots,x_r')\in R$ such that $x_i'\geq 2x_i\lceil T/n\rceil+x_i$ for each $i\in\{1,\ldots,r\}$.
We construct a colored ring $(C',g')$ on $n'=x_1'+\cdots+x_r'$ nodes $v_0',\ldots,v_{n'-1}'$ such that:
\begin{itemize}
 \item $v_i'$ is adjacent to $v_{(i+1)\modulo n'}'$, $i\in\{0,\ldots,n'-1\}$,
 \item the port numbers of the edge $\{v_i',v_{i+1}'\}$ at $v_i'$ and $v_{i+1}'$ are equal to the port numbers of the edge $\{v_{i\modulo n},v_{(i+1)\modulo n}\}$ at $v_{i\modulo n}$ and $v_{(i+1)\modulo n}$, respectively, $i\in\{0,\ldots,2n\lceil T/n\rceil+n-1\}$, in $C$,
 \item the port numbers of the remaining edges are set arbitrarily in a way that guarantees proper port labeling.
\end{itemize}
Moreover, the colors are assigned to the nodes of $C'$ as follows:
\[g'(v_{i}')=g(v_{i\modulo n}), \quad i\in\{0,\ldots,2n \lceil T/n\rceil+n-1\},\]
and the remaining nodes, i.e., the ones in $X=\{v_{2n \lceil T/n\rceil+n}',\ldots,v_{n'}'\}$, receive colors in any way that ensures that the size of color $j\in\{1,\ldots,r\}$ in $C'$ is $x_j'$.
See Figure~\ref{fig:ring} for an example of the construction of $(C',g')$.
The colored view of depth $T$ of $v_0$ in $(C,g)$ is the same as the colored views of depth $T$ of $v_{n \lceil T/n\rceil}'$ and of $v_{n \lceil T/n\rceil+n}'$ in $(C',g')$ because $T\leq n \lceil T/n\rceil$.

Consider the problem $\problemTOP$.
If the answer produced by $A$ in $(C,g)$ is ``unsolvable'', then the lemma follows.
Hence, we may assume that $A$ returns the topology of $(C,g)$.
By Proposition \ref{prop:algoView}, the algorithm $A$ executed by $v_{n \lceil T/n\rceil}'$ in $(C',g')$ stops after time $T$ and produces the same answer as for $v_0$ in $(C,g)$.
Thus, $A$ must also return the topology of $(C,g)$ when executed by $v_{n\lceil T/n\rceil}'$ in $(C',g')$.
Since $C$ and $C'$ are of different sizes, we obtain a contradiction, as required.

Next consider the problem $\problemLE$.
If the answer produced by $A$ in $(C,g)$ is ``unsolvable'', then the lemma follows.
Hence, we may assume that $A$ elects a leader in $(C,g)$.
By Proposition \ref{prop:algoView}, the algorithm $A$ executed by $v_{n \lceil T/n\rceil}'$ and $v_{n \lceil T/n\rceil+n}'$ in $(C',g')$ stops after time $T$ and produces the same answer as in $(C,g)$.
Thus, these two nodes elect different leaders --- a contradiction.
\end{proof}

We next turn attention to the issue of time needed to solve problems $\problemTOP$ and $\problemLE$, assuming that an upper bound $k$ on the size of some color is given to all nodes.  We give a lower bound showing that the time
$O(kD+D\log(n/D))$ of our algorithm is optimal.
We first construct a class of networks, for which
time $\Theta(kD)$ cannot be improved.

\begin{proposition}\label{first-lb}
Let $k,D \leq n$ be arbitrary positive integers.
There exists a network of size $\Theta(n)$ and diameter $\Theta(D)$, whose nodes, if they are given as input an upper bound $k$ on the size of one color $\ourColor$ and have no other information on the network, need time  $\Omega(kD)$ to solve the problems $\problemTOP$ and $\problemLE$.
\end{proposition}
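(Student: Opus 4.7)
The plan is to exploit Proposition~\ref{prop:algoView}: if I can exhibit two colored networks $G_1$ and $G_2$, each of size $\Theta(n)$, diameter $\Theta(\diam)$, and with at most $k$ nodes of color $\ourColor$, together with nodes $u_1\in G_1$ and $u_2\in G_2$ such that $\view[T](u_1)=\view[T](u_2)$ for some $T\in\Omega(k\diam)$, and such that any correct algorithm for $\problemTOP$ (respectively $\problemLE$) is forced to return different answers at $u_1$ and at $u_2$, then $\Omega(k\diam)$ rounds are necessary. A single construction will be designed to serve both problems simultaneously.

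The construction I have in mind is a ``chain of gadgets''. Each gadget is a small colored structure of diameter $\Theta(\diam)$ containing exactly one vertex of color $\ourColor$. I would string together $k$ (respectively $k{+}1$) such gadgets in a cyclic arrangement, using carefully chosen port labelings so that from a designated base vertex $u$, the canonical walk following ports $0,0,\ldots$ proceeds through all the gadgets in order, while walks using other ports collapse back into the gadget they started in. To keep the geometric diameter from blowing up to $\Theta(k\diam)$, I would attach every vertex to a common hub node of a \emph{different} color, yielding a network of diameter $\Theta(\diam)$ and size $\Theta(n)$; the port numbers incident to the hub are chosen large enough that shortcut walks through the hub do not appear on the canonical backbone of the view tree before depth $\Omega(k\diam)$. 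Network $G_2$ is the closed cyclic version (for which $\problemLE$ is unsolvable by the resulting $k$-fold symmetry and $\problemTOP$ has a different quotient than $G_1$), while $G_1$ is obtained by inserting one additional, asymmetrically placed gadget that breaks the symmetry and forces unique answers for both problems.

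The key lemma to verify is $\view[T](u_1)=\view[T](u_2)$ for $T=\Omega(k\diam)$: every walk of length at most $T$ from $u_1$ visits the same sequence of gadgets, colors and port pairings as the corresponding walk from $u_2$, because the structural difference between $G_1$ and $G_2$ lies $\Omega(k\diam)$ view-steps away from each base node. Combined with the fact that the correct outputs of $\problemTOP$ (distinct topologies) and of $\problemLE$ (unique leader in $G_1$ versus ``unsolvable'' in $G_2$) at $u_1$ and $u_2$ differ, Proposition~\ref{prop:algoView} yields the desired $\Omega(k\diam)$ lower bound. The most delicate step, and the one I expect to consume most of the technical work, is engineering the port labeling and the hub attachment so that the diameter is kept at $\Theta(\diam)$ while the backbone walks remain ``long'' in the view tree; this is where the argument must be carried out carefully to prevent hub-induced shortcut walks from collapsing the backbone and prematurely exposing the difference between $G_1$ and $G_2$.
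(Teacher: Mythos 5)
Your high-level strategy---exhibit two colored networks whose designated nodes have equal truncated views to depth $\Omega(k\diam)$ but must receive different answers, then invoke Proposition~\ref{prop:algoView}---is exactly the paper's strategy. However, your construction has a fatal flaw, and it stems from the hub. The view $\view[t](v)$ is the \emph{entire} tree of all walks of length at most $t$ from $v$, with every neighbor of every reached node appearing as a child; assigning ``large port numbers'' to hub edges does not suppress those branches, it merely labels them. Once every vertex is attached to a common hub, every vertex is at distance $2$ from every other, so the hub's full neighborhood sits at depth $2$ of every node's view. In particular the hub's degree differs between $G_1$ and $G_2$ (since $G_1$ has an extra gadget), so already $\view[2](u_1)\neq\view[2](u_2)$, and your key lemma fails. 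More generally, any structural difference between the two networks is exposed in the views within $O(\diam)$ steps through the hub, so no hub-based construction can push indistinguishability to depth $\Omega(k\diam)$ while keeping the geometric diameter at $\Theta(\diam)$ in this way. Two further problems: a cycle of $k$ gadgets each of diameter $\Theta(\diam)$ has $\Omega(k\diam)$ vertices, which need not be $O(n)$ (the proposition allows, e.g., $k=\diam=n$); and your $G_1$ with $k+1$ gadgets, each containing one $\ourColor$-vertex, has $k+1$ nodes of color $\ourColor$, violating the promised upper bound $k$.

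The paper escapes this bind by dropping a constraint you imposed on yourself: only the network named in the proposition must have size $\Theta(n)$ and diameter $\Theta(\diam)$; the indistinguishable counterpart may be large. Concretely, it takes a chordal ring $G(n,d)$ with $d=\lfloor n/\diam\rfloor$ (size $n$, diameter $\Theta(\diam)$, a single $\ourColor$-node) as the network of the statement, and pairs it with the chordal ring $G(kn,d)$ --- size $\Theta(kn)$ and diameter $\Theta(k\diam)$ --- with $k$ periodically placed $\ourColor$-nodes and one pendant edge that breaks all symmetry. The periodic local isomorphism $v'_i\mapsto v_{i\bmod n}$ immediately gives equality of views to depth $T<\lfloor k/3\rfloor\diam$ for nodes far from the pendant edge, and Corollary~\ref{cor:main} guarantees that $\problemTOP$ and $\problemLE$ are genuinely solvable in the large network, so whatever answer the algorithm transfers from the small network is wrong there (for $\problemLE$ one additionally uses two far-apart nodes of the large network that would elect different leaders). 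If you want to repair your proof, the essential fix is to let the second network be long and thin rather than forcing it back to diameter $\Theta(\diam)$ with a hub.
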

\begin{proof}
For any $n'$ and $d<n'/2$, we define a $n'$-node graph $G(n',d)$ (called a chordal ring).
Denote by $v_{0},\ldots,v_{n'-1}$ the nodes of $G(n',d)$.
For each $i\in\{0,\ldots,n'-1\}$ and $j\in\{1,\ldots,d\}$, let $\{v_{i},v_{i+j}\}$ be an edge of $G(n',d)$, where the port number of this edge at $v_i$ is $j-1$ and the port number at $v_{i+j}$ is $d+j-1$.
Note that the diameter of $G(n',d)$ is $\Theta(n'/d)$.

Since our result {holds asymptotically}, we may assume that $D\geq 3$ and $k\geq 12$.
Consider an algorithm $A$ for solving problem $\problemTOP$ or $\problemLE$ in any network.
Let $d=\lfloor n/D\rfloor$.
(Note that $D\geq 3$ implies $d<n/2$, as required in the construction of $G(n,d)$.)
Let $u$ be any node of $G(n,d)$.
Let $(G(n,d),g)$ be the colored network in which $g(u)=\ourColor$, and all other nodes of $G(n,d)$ have the same color $\ourColor'$, different from $\ourColor$.
Note that the diameter of $G(n,d)$ is $\Theta(D)$.
We run the algorithm $A$ in the colored network $(G(n,d),g)$.
We argue that $A$ should run for time $\Omega(kD)$.
Suppose for a contradiction that $A$ stops and produces an answer in round $T<\lfloor k/3\rfloor D$.

Consider the network $G(kn,d)$ on the set of nodes $\{v_0',\ldots,v_{kn-1}'\}$.
Construct a network $G'$ by adding a pendant edge to the node $v_0'$ of $G(kn,d)$, i.e., add an extra node of degree $1$ and attach it to $v_0'$.
Let $u'=v_{\lfloor k/2\rfloor n}'$.
Let $(G',g')$ be a colored network in which $g'(v_{nj}')=\ourColor$ for each $j\in\{0,\ldots,k-1\}$, while all other nodes of this network have color $\ourColor'$.
By construction of $G'$, the distance between $u'$ and $v_0'$ in $G'$ is greater than $T$.
Thus, the truncated colored view $\view[T](u')$ in $(G',g')$ is the same as the truncated colored view $\view[T](u)$ in $(G(n,d),g)$.
Now, we run $A$ in the colored network $(G',g')$.

Suppose that $A$ is an algorithm solving the problem $\problemTOP$.
By Proposition \ref{prop:algoView}, $A$ executed on $u$ and $u'$ produces the same answer to problem $\problemTOP$ in $G(n,d)$ and $G'$, respectively.
If the answer on $u$ is the topology of $G(n,d)$, then we immediately have a contradiction since the two networks are not isomorphic.
On the other hand, if the answer is ``unsolvable'', then this answer is incorrect for $G'$.
The latter is due to Corollary \ref{cor:main}.

Let now $A$ be an algorithm solving the problem $\problemLE$.
Let $v'=v_{\lfloor kn/2\rfloor+n}'$.
The distance from $v'$ to $v_0'$ in $G'$ is at least $(\lfloor k/2\rfloor-1)\lfloor n/d\rfloor\geq \lfloor n/d\rfloor\cdot k/3\geq Dk/3\geq T$ for $k\geq 12$.
Thus, $\view[T](u')=\view[T](v')$.
By Proposition \ref{prop:algoView}, $A$ produces the same output at $u'$ and $v'$ after time $T$.
By Corollary \ref{cor:main}, the algorithm $A$ cannot output ``unsolvable'' because all nodes of $G'$ have unique views and the size of color $\ourColor$ is $k$ in $(G',g')$.
Thus, $u'$ and $v'$ elect different leaders --- a contradiction.
\end{proof}

The other part of our lower bound follows from \cite{DKP}.
\begin{proposition}\label{second-lb}
Let $D \leq n$ be arbitrary positive integers. There exists a network of size $\Theta(n)$
 and diameter $\Theta(D)$, whose nodes need time at least $\Omega(D\log(n/D))$ to solve problems $\problemTOP$ and $\problemLE$, even if all nodes have the same color and they are given the size and the diameter of the 
 network.
\end{proposition}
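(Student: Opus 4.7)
The plan is to transfer the view-level separation of Proposition~\ref{trunc-lower} into the present setting. Apply that proposition with the given $n$ and $D$ to obtain a network $G$ of size $\Theta(n)$ and diameter $\Theta(D)$ with two nodes $u,v$ having unique global views in $G$ but $\cV^{h'}(u)=\cV^{h'}(v)$ for some $h'\in\Theta(D\log(n/D))$, together with the companion network $G'$ of the same size and diameter and a node $u'\in G'$ with $\cV^{h'}(u')=\cV^{h'}(u)$. Color every node of both networks with a single common color, and give each node the size $n$ and diameter $D$ as input; a bound on the size of the only color is then trivially available, namely $k=n$. The $\cal{LOCAL}$ running time we wish to rule out is $T\leq h'-1$.

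For $\problemTOP$, suppose for contradiction that some algorithm $A$ stops after $T\leq h'-1$ rounds with a correct output. Since the construction of \cite{DKP} can be taken so that every node of $G$ has a unique view, $\problemTOP$ is feasible in $G$ (by the reasoning behind Corollary~\ref{cor:main}), and $A$ at $u$ must output a labeled isomorphic copy of $G$ with $u$ marked. Proposition~\ref{prop:algoView}, applied to $u$ in $G$ and $u'$ in $G'$ (which have equal truncated views at depth $h'\geq T$), forces the same output at $u'$. But the correct output at $u'$ is either $\returnFailure$ or a labeled copy of $G'$, and in \cite{DKP} the network $G'$ is non-isomorphic to $G$; neither possibility agrees with a labeled copy of $G$, a contradiction. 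Hence $\problemTOP$ requires time at least $h'=\Omega(D\log(n/D))$.

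For $\problemLE$, again suppose $A$ solves the problem in time $T\leq h'-1$. Feasibility of $\problemLE$ in $G$ is guaranteed by the uniqueness of views, so $A$ at $u$ outputs a sequence $\sigma$ of port numbers leading to the (unique) leader $\ell$ of $G$. By Proposition~\ref{prop:algoView} the same sequence $\sigma$ is output at $v$ in $G$ and at $u'$ in $G'$; correctness forces $u\cdot\sigma=v\cdot\sigma=\ell$ and also forces $\sigma$ to lead from $u'$ to a valid leader of $G'$. The companion graph $G'$ is arranged in \cite{DKP} so that these requirements are inconsistent (for instance, because $\problemLE$ is infeasible in $G'$, in which case the correct output at $u'$ should be $\returnFailure$ rather than a port sequence). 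This contradiction yields the lower bound for $\problemLE$.

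The main obstacle is the $\problemLE$ case: the bare equality $\cV^{h'}(u)=\cV^{h'}(v)$ does not by itself rule out the existence of a common port sequence $\sigma$ leading from both $u$ and $v$ to a single node of $G$. The clean contradiction must therefore be imported from the specific construction in \cite{DKP}, exploiting structural properties of the companion network $G'$ (non-isomorphism with $G$ or infeasibility of leader election in $G'$) that Proposition~\ref{trunc-lower} uses internally but does not expose in its statement.
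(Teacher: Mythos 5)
Your $\problemTOP$ half is correct and is essentially the paper's own argument: transfer the output from $u$ in $G$ to $u'$ in $G'$ via Proposition~\ref{prop:algoView}, and combine the non-isomorphism of $G$ and $G'$ with the feasibility of $\problemTOP$ in $G$ (all views distinct, $k=n$, Corollary~\ref{cor:main}) to reach a contradiction.

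The $\problemLE$ half has a genuine gap, which you yourself flag. You derive that the same port sequence $\sigma$ is output at $v$ in $G$ and at $u'$ in $G'$, and then appeal to the claim that ``the companion graph $G'$ is arranged in \cite{DKP} so that these requirements are inconsistent (for instance, because $\problemLE$ is infeasible in $G'$)''. Nothing in Proposition~\ref{trunc-lower} gives you any such property of $G'$: infeasibility of $\problemLE$ in $G'$ is neither stated there nor true in general, so as written the contradiction is not established. The paper's proof never leaves $G$: since $\cV^{h'}(u)=\cV^{h'}(v)$ with $u\neq v$, Proposition~\ref{prop:algoView} forces the same output at $u$ and $v$; if that output is a port sequence, the two nodes designate two distinct leaders, so the common output must be \returnFailure{}, which contradicts Corollary~\ref{cor:main}. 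The ``obstacle'' you raise --- that a common $\sigma$ might lead from both $u$ and $v$ to the same node --- dissolves once you recall how paths are coded in this paper: a path is given by the port numbers of each traversed edge at both of its endpoints (this is exactly what the algorithm outputs, namely the port sequence of $\viewPath{\view[l]}{v}$). Such a sequence can be walked backwards from its terminal node, because a port number at a node identifies a unique incident edge; hence $u\cdot\sigma=v\cdot\sigma$ would force $u=v$ by reversing $\sigma$ step by step from the common endpoint. So no structural property of the \cite{DKP} construction beyond what Proposition~\ref{trunc-lower} exposes (plus uniqueness of all views in $G$, which both you and the paper invoke) is needed, and the argument should be run on the pair $u,v$ inside $G$ rather than on $G'$.
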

\begin{proof}
By Proposition \ref{trunc-lower}, there exists a network $G$ of size $\Theta(n)$ and diameter $\Theta(D)$, with nodes $u$ and $v$, both with unique views,
such that $\cV(u)\neq \cV(v)$ but $\cV^{h'}(u)=\cV^{h'}(v)$, for some $h' \in \Theta(D \log (n/D))$.
Thus, by Proposition \ref{prop:algoView}, any algorithm $A$ that stops after at most $h'$ steps and produces an answer to problem $\problemLE$, gives the same answer at $u$ and $v$.
Thus, this answer must be ``unsolvable''.
(Otherwise, two distinct leaders would be elected.)
However, since the nodes of $G$ have pairwise different views and the size of the network is known, Corollary~\ref{cor:main} implies that leader election is possible in this network.
Thus, any algorithm solving problem $\problemLE$ needs time $\Omega(D\log(n/D))$ in $G$.

Now consider the problem $\problemTOP$ and let $A$ be any algorithm solving this problem.
Suppose for a contradiction that $A$ stops after at most $h'$ steps. 
By Proposition~\ref{trunc-lower}, there exists a network $G'$, different than $G$, with the same size and diameter as $G$, with a node $u'$, such that $\view[h'](u')=\view[h'](u)$.
Thus, by Proposition~\ref{prop:algoView}, $A$ returns the same answer at $u$ and $u'$.
Since $G$ and $G'$ are different, this answer must be ``unsolvable''.
Since all nodes in $G$ have pairwise different views, by Proposition~\ref{prop:algoView} (where $k$ is {taken} to be the size of $G$), $\problemTOP$ is possible in $G$ --- a contradiction.
\end{proof}

Theorem \ref{thm:main}, together with Propositions \ref{first-lb} and \ref{second-lb}, imply the following corollary showing that our algorithm is time-optimal.

\begin{corollary}
The optimal time to solve problems $\problemTOP$ and $\problemLE$ on $n$-node networks with diameter $D$,
assuming that nodes know only an upper bound $k$ on the size of a given color, is $\Theta(kD+D\log(n/D))$.  
\end{corollary}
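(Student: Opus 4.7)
The plan is to observe that this corollary is a direct combination of the upper bound provided by the algorithm together with the two separate lower bound constructions presented earlier in the paper. No new construction is needed; the work is simply to glue together what has already been proven.

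First I would invoke Theorem~\ref{thm:main}, which establishes that Algorithm $\algorithmMain$ solves both $\problemTOP$ and $\problemLE$ in any $n$-node network of diameter $D$ in time $O(kD + D\log(n/D))$, where $k$ is the given upper bound on the size of color $\ourColor$. This settles the upper bound side of the $\Theta$-expression.

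For the matching lower bound, I would combine Propositions~\ref{first-lb} and~\ref{second-lb}. Proposition~\ref{first-lb} exhibits, for every choice of $k$ and $D$, a network of size $\Theta(n)$ and diameter $\Theta(D)$ in which any correct algorithm for $\problemTOP$ or $\problemLE$ must run for $\Omega(kD)$ rounds, even when the only input is the upper bound $k$. Proposition~\ref{second-lb} exhibits, for every choice of $D$, a network of size $\Theta(n)$ and diameter $\Theta(D)$ in which any correct algorithm for these problems must run for $\Omega(D\log(n/D))$ rounds, even when stronger inputs (size and diameter) are supplied and thus certainly when only the weaker input $k$ is available. Since these two families of networks witness separate lower bounds that must each be satisfied by any algorithm working in the required generality, the worst-case time is at least $\max\{kD,\, D\log(n/D)\}$, which is $\Omega(kD + D\log(n/D))$.

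The only step that needs any care at all is making sure the two lower-bound families are compatible with the input regime of the corollary: the bound $k$ is a given input in both cases (in Proposition~\ref{second-lb} additional information is assumed, so any algorithm working only from $k$ is \emph{a fortiori} subject to that bound). Combining the upper and lower estimates yields the stated $\Theta(kD + D\log(n/D))$ optimal time, completing the proof.
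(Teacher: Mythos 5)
Your proposal is correct and matches the paper's own argument, which derives the corollary directly by combining Theorem~\ref{thm:main} for the upper bound with Propositions~\ref{first-lb} and~\ref{second-lb} for the two components of the lower bound. Your additional remark that the stronger inputs in Proposition~\ref{second-lb} make the bound apply \emph{a fortiori} in the weaker input regime is a sensible clarification but not a departure from the paper's reasoning.
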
 

\section{Conclusion}

We showed that nodes of a colored network can solve problems $\problemTOP$ and $\problemLE$, if they are given an upper bound on the number of nodes of a given color, and we studied the time
of solving these problems in the $\cal{LOCAL}$  model, under this assumption. 

Notice that the synchronous behavior of the $\cal{LOCAL}$  model can be easily reproduced in an asynchronous network, by defining, for each node $u$ separately, an asynchronous round $i$ consisting of the following actions of this node: node $u$ performs local computations, then sends messages stamped with integer $i$ to all its neighbors, and waits for messages stamped $i$ from all neighbors.
In order to implement this, every node must send a message with all consecutive stamps, until termination, some of the messages possibly empty.
Our results concerning time of solving problems $\problemTOP$ and $\problemLE$ can be translated for asynchronous networks by replacing ``the number of rounds''  by ``the maximum number of asynchronous rounds, over all nodes''.

Let $D$ be the diameter of the network.
If nodes have distinct labels, then time $D+1$ in the $\cal{LOCAL}$  model is enough to solve any problem solvable on a given network, as after this time all nodes solve topology recognition.
By contrast, in our scenario of colored nodes, time $D+1$ is often not enough, for example to elect a leader, or to perform topology recognition, even if these tasks are feasible.
This is due to the fact that after time $t\leq D+1$ each node may learn only all colored paths of length $t$ originating at it.
Acquiring this information does not imply getting a picture of the radius $t$ colored neighborhood of the node.
This is because a node $v$ may not know if two paths originating at it have the same other endpoint or not.
We showed that these ambiguities may force time much larger than $D$ to solve problems $\problemTOP$ and $\problemLE$.

As it is always assumed in the $\cal{LOCAL}$ model, we allowed arbitrarily large messages to be sent in each round. Bounding the size of messages to logarithmic in the size of the network, as
it is assumed in the alternative $\cal{CONGEST}$ model, would likely have an important impact on the time of solving problems $\problemTOP$ and $\problemLE$. Hence an interesting open question is
to establish the best time of solving these problems in the latter model.

\bibliographystyle{plain}

\end{document}